\newcommand{\Bag}{X}
\newcommand{\tw}{\mathrm{tw}}
\newcommand{\tin}{\mathsf{tree} \textnormal{-} \alpha}
\newcommand{\HH}{\mathcal{H}}
\newcommand{\Q}{\mathbb{Q}}
\renewcommand{\P}{\textsf{P}}
\newcommand{\NP}{\textsf{NP}}
\renewcommand{\O}{\mathcal{O}}
\newcommand{\dist}{\textsf{dist}}
\newcommand{\Oh}{\O}
\newcommand{\forked}[1]{\Breve{#1}}
\newcommand{\msotwo}{$\mathsf{MSO}_2$\xspace}
\newcommand{\cmsotwo}{$\mathsf{CMSO}_2$\xspace}
\newcommand{\cpmsotwo}{$\mathsf{C}_{\leq p}\mathsf{MSO}_2$\xspace}
\newcommand{\msotypes}{\mathsf{Types}}
\newcommand{\msotype}{\mathsf{type}}
\newcommand{\wei}{\mathfrak{w}}
\newcommand{\dom}{\mathrm{dom}}
\newcommand{\cT}{\mathcal{T}}
\newcommand{\Tab}{\mathsf{Tab}\xspace}
\newtheorem{theorem}{Theorem}[section]
\newtheorem{corollary}[theorem]{Corollary}
\newtheorem{proposition}[theorem]{Proposition}
\newtheorem{lemma}[theorem]{Lemma}
\newtheorem{observation}[theorem]{Observation}
\theoremstyle{definition}
\newtheorem{remark}[theorem]{Remark}
\title{Tree decompositions with bounded independence number:\\
beyond independent sets}
\author{Martin Milani{\v c}\\
\small FAMNIT and IAM, University of Primorska, Koper, Slovenia\\
\small \texttt{martin.milanic@upr.si}\\
\and
Pawe\l{} Rz\k{a}\.zewski\\
\small  Warsaw University of Technology and University of Warsaw, Poland\\
\small \texttt{pawel.rzazewski@pw.edu.pl}
}
\begin{document}

\maketitle

\begin{abstract}
We continue the study of graph classes in which the treewidth can only be large due to the presence of a large clique, and, more specifically, of graph classes with bounded tree-independence number.
In [Dallard, Milani\v{c}, and \v{S}torgel, Treewidth versus clique number. {II}. Tree-independence number, 2022], it was shown that the Maximum Weight Independent Packing problem, which is a common generalization of the Independent Set and Induced Matching problems, can be solved in polynomial time provided that the input graph is given along with a tree decomposition with bounded independence number.
We provide further examples of algorithmic problems that can be solved in polynomial time under this assumption.
This includes, for all even positive integers $d$,  the problem of packing subgraphs at distance at least $d$ (generalizing the Maximum Weight Independent Packing problem) and the problem of finding a large induced sparse subgraph satisfying an arbitrary but fixed property expressible in counting monadic second-order logic.
As part of our approach, we generalize some classical results on powers of chordal graphs to the context of general graphs and their tree-independence numbers.
\end{abstract}


\section{Introduction}\label{sec:intro}
Structured graph decompositions and the corresponding graph width parameters have become one of the central tools for dealing with algorithmically hard graph problems.
One of the most well-known and well-studied graph width parameters is the \emph{treewidth} of a graph, which, roughly speaking, measures how similar the graph is to a tree.
Since graphs of bounded treewidth are necessarily sparse, several generalizations of treewidth were proposed in the literature that allow for families of dense graphs, including clique-width,
rank-width,
Boolean-width,
mim-width,
and twin-width.
A common feature of all these parameters is that when the parameter of the input graph is bounded by a fixed constant, efficient dynamic programming algorithms can be developed for various graph problems, typically applied along some hierarchical decomposition of the graph into simpler parts (see, e.g.,~\cite{MR985145,MR1739644,MR4020556,MR2857670,MR4288865,MR4232070,MR4189412,MR3967291}, as well as the algorithmic metatheorems~\cite{MR1105479,MR3126917,MR3126918,MR1739644}).

Another generalization of bounded treewidth that allows for families of dense graphs is given by the concept of \emph{$(\tw,\omega)$-boundedness}, a property of a graph class meaning that the treewidth of graphs in the class can only be large due to the presence of a large clique, and the same is true for all induced subgraphs of graphs in the class.
While this property is known to imply some good algorithmic properties related to clique and coloring problems (see~\cite{DBLP:journals/endm/ChaplickZ17,MR4332111,DMS-WG2020,dallard2021treewidth,dallard2022firstpaper}), it is an interesting question whether $(\tw,\omega)$-boundedness has any other algorithmic implications, in particular, for problems related to independent sets.
This question was partially answered by Dallard, Milani\v{c}, and \v{S}torgel, who initiated a systematic investigation of $(\tw,\omega)$-bounded graph classes (see~\cite{dallard2021treewidth,DMS-WG2020}, as well as~\cite{dallard2022firstpaper,dallard2022secondpaper}).
In particular, in~\cite{dallard2022firstpaper} a sufficient condition was identified for $(\tw,\omega)$-bounded graph classes to admit a polynomial-time algorithm for the \textsc{Max Weight Independent Packing} problem and, as a consequence, for the weighted variants of the \textsc{Independent Set} and \textsc{Induced Matching} problems. 
These results were obtained using the notion of \emph{tree-independence number}, a graph invariant related to tree decompositions, which is defined similarly to treewidth but aims at minimizing the maximum size of an \emph{independent} set of vertices contained in a bag of a tree decomposition---as opposed to an arbitrary set of vertices, which is the case for treewidth. 

Ramsey's theorem implies that every graph class with bounded tree-independence number is $(\tw,\omega)$-bounded. 
As shown in~\cite{dallard2022secondpaper}, the converse implication is known to hold in graph classes defined by excluding a single graph with respect to any of six well-known graph containment relations (the subgraph, topological minor, and minor relations, as well as their induced variants).
In general, it is not known whether the two concepts are equivalent; it is conjectured that they are~\cite{dallard2022secondpaper}.
The main algorithmic result from~\cite{dallard2022firstpaper} states that the \textsc{Max Weight Independent Packing} problem can be solved in polynomial time provided that the input graph is given along with a tree decomposition with bounded independence number.
(The precise definition of the problem and the result statement will be given in \cref{sec:prelim}.)
Several questions regarding the computation of tree-independence number left open by the work~\cite{dallard2022firstpaper} were resolved in a subsequent work by Dallard et al.~\cite{dallard2022computing}, where the following results were obtained.
First, it was shown that the exact computation of the tree-independence number is \textsf{para-NP-hard}, more precisely, that for every $k\ge 4$ it is $\NP$-complete to determine if a given graph has tree-independence number at most $k$. 
Second, a polynomial-time algorithm was developed for computing a tree decomposition with bounded independence number, whenever one exists.
Given an $n$-vertex graph $G$ and an integer $k$, in time $2^{\mathcal{O}(k^2)}n^{\mathcal{O}(k)}$ the algorithm either outputs a tree decomposition of $G$ with independence number at most $8k$, or determines that the tree-independence number of $G$ is larger than $k$.

\begin{sloppypar}
Combined with the results from~\cite{dallard2022firstpaper}, it follows that the \textsc{Max Weight Independent Packing} problem is polynomial-time solvable in any class of graphs with bounded tree-independence number.
Thus, boundedness of the tree-independence number is an algorithmically useful refinement of \hbox{$(\tw,\omega)$-boundedness} that is still general enough to capture a wide variety of families of graph classes, including, among others, graph classes of bounded treewidth, graph classes of bounded independence number, and intersection graphs of connected subgraphs of graphs with bounded treewidth (see~\cite[Theorem 3.12]{dallard2022firstpaper}).
\end{sloppypar}

\begin{sloppypar}
In this paper,
we provide further examples for algorithmic usefulness of bounded tree-independence number.
\end{sloppypar}


\subsection{Our results and relation to previous work}

We obtain two main sets of results.

\begin{sloppypar}
First, we generalize the polynomial-time solvability of the \textsc{Max Weight Independent Packing} problem in classes of graphs equipped with a tree decomposition with bounded tree-independence number from~\cite{dallard2022firstpaper} to the problem of packing subgraphs at distance $d$, for all even positive integers $d$.
The case $d = 2$ corresponds to the result from~\cite{dallard2022firstpaper}.
Unless $\P = \NP$, this result does not generalize to odd values of $d$, since  the distance-$3$ variant of the \textsc{Independent Set} problem is $\NP$-hard for chordal graphs, which have tree-independence number at most one.
\end{sloppypar}

We obtain this algorithmic result by proving a structural result on tree-independence number that may be of interest on its own.
Given a graph $G$ and a positive integer $k$, we denote by $G^k$ the \emph{$k$-th power} of $G$, that is, the graph obtained from the graph $G$ by adding to it all edges between pairs of vertices at distance at most $k$.
We show that for any graph $G$ and any positive integer $k$, the tree-independence number of the graph $G^{k+2}$ cannot exceed that of the graph $G^k$.
This is a significant generalization of a result of Duchet~\cite{MR778751}, who proved the same result for the case when $G^k$ is a chordal graph (that is, when $G^k$ has tree-independence number at most one).
As a consequence, we obtain that for every positive integer $k$, the class of graphs with tree-independence number at most $k$ is closed under taking odd powers, which generalizes the analogous result for the class of chordal graphs proved by Balakrishnan and Paulraja~\cite{MR704427}.
We complement this result by observing that the class of even powers of chordal graphs (for any fixed power) is not contained in any nontrivial hereditary graph class.
This implies in particular that any such class is $(\tw,\omega)$-unbounded and has unbounded tree-independence number.

\medskip
Our second set of results is an algorithmic metatheorem for graphs equipped with a tree decomposition with bounded independence number.
It deals with the problem of finding a large induced subgraph of bounded chromatic number, satisfying an arbitrary but fixed property expressible in Counting Monadic Second-Order logic (\cmsotwo). This far-reaching generalization of \textsc{Max Weight Independent Set} goes along the lines of analogous results proven in various settings, e.g., for bounded-treewidth graphs~\cite{DBLP:journals/iandc/Courcelle90}, for graphs with polynomially many \emph{potential maximal cliques}~\cite{DBLP:journals/siamcomp/FominTV15}, or for graphs with no long induced cycles~\cite{DBLP:conf/soda/AbrishamiCPRS21,DBLP:conf/stoc/GartlandLPPR21}.
An astute reader might notice that usually such problems concern finding large \emph{sparse} subgraphs, which typically means bounded treewidth~\cite{DBLP:journals/siamcomp/FominTV15,DBLP:conf/soda/AbrishamiCPRS21} or degeneracy~\cite{DBLP:conf/stoc/GartlandLPPR21}.
However, it is straightforward to observe that in classes of bounded tree-independence number treewidth is bounded by a linear function of the chromatic number. Note that this is far from the truth in general graphs; consider, e.g., large complete bipartite graphs.

Some examples of problems that can be solved in polynomial time with this approach include
\begin{itemize}
    \item finding the largest induced forest (which is equivalent to \textsc{Min Feedback Vertex Set}),
    \item finding the largest induced bipartite subgraph (which is equivalent to \textsc{Min Odd Cycle Transversal}),
    \item finding the largest induced planar subgraph (which is equivalent to \textsc{Planarization}),
    \item finding the maximum number of pairwise disjoint and non-adjacent cycles,
    \item given a graph whose every vertex $v$ is equipped with a list $L(v) \subseteq \{1,\ldots,r\}$ for constant $r$,
    finding a largest induced subgraph that admits a proper $r$-coloring respecting lists $L$.
\end{itemize}

We point out that the bound on the chromatic number of the subgraph we are looking for must be constant. 
Indeed, \textsc{Coloring} is \NP-hard in graphs with tree-independence number 2, and \textsc{List Coloring} is \NP-hard in graphs with tree-independence number 1 (i.e., chordal graphs).

\subsection{Organization of the paper}

In \cref{sec:prelim}, we provide the necessary preliminaries.
In \cref{sec:distance}, we consider the tree-independence number in the context of graph powers, and develop algorithmic implications of the structural results for the problem of packing connected subgraphs at even distance.
In \cref{sec:cmso}, we develop a general theorem about polynomial-time solvability of the problem of finding a large induced subgraph of small chromatic number satisfying an arbitrary but fixed property expressible in \cmsotwo logic, provided that the input graph is equipped with a tree decomposition with bounded independence number.

\section{Preliminaries}\label{sec:prelim}
For an integer $n$, by $[n]$ we denote the set $\{1,\ldots,n\}$.
Let $G$ be a graph. 
For a set $X \subseteq V(G)$, by $G[X]$ we denote the subgraph of $G$ induced by $X$.
A \emph{clique} in a graph $G$ is a set of pairwise adjacent vertices; an \emph{independent set} is a set of pairwise non-adjacent vertices.

The \emph{distance} between two vertices $u$ and $v$ in $G$ is denoted by $\dist_G(u,v)$ and defined as the length of a shortest path between $u$ and $v$ (or $\infty$ if there is no $u$-$v$-path in $G$).
Given a positive integer $k$, the \emph{$k$-th power} of $G$ is the graph $G^k$ with vertex set $V$, in which two distinct vertices $u$ and $v$ are adjacent if and only if $\dist_G(u,v)\le k$. Note that $G^1 =G$.

A graph is \emph{chordal} if it does not contain any induced cycles of length at least four.
A graph is \emph{split} if it admits a partition of the vertex set into a clique and an independent set.
It is known (and easy to see) that every split graph is a chordal graph.

\paragraph{Tree decompositions and tree-independence number.}
A \emph{tree decomposition} of a graph $G$ is a pair $\mathcal{T} = (T, \{X_t\}_{t\in V(T)})$ where $T$ is a tree and every node $t$ of $T$ is assigned a vertex subset $X_t\subseteq V(G)$ called a \emph{bag} such that the following conditions are satisfied: every vertex is in at least one bag, for every edge $uv\in E(G)$ there exists a node $t\in V(T)$ such that $X_t$ contains both $u$ and $v$, and for every vertex $u\in V(G)$ the subgraph of $T$ induced by the set $\{t\in V(T): u\in X_t\}$ is connected (that is, a tree).

Usually we consider tree decompositions to be rooted. 
For a node $t$ of $T$, by $T_t$ we denote the subtree of $T$ rooted at $t$, and by $G_t$ we denote the subgraph of $G$ induced by $\bigcup_{t' \in V(T_t)} X_{t'}$.
A rooted tree decomposition $\cT = (T,\{X\}_{t \in V(T)})$ of a graph $G$ is \emph{nice}
if each node $t$ is of one of the following types:
\begin{description}[leftmargin=!,labelwidth=\widthof{\bfseries introduce}]
\item[leaf] a leaf in the decomposition tree such that $|X_t|=1$,
\item[introduce] an inner node with a single child $t'$, such that $X_t = X_{t'} \cup \{v\}$ for some $v \in V(G) \setminus X_{t'}$,
\item[forget] an inner node with a single child $t'$, such that $X_t = X_{t'} \setminus \{v\}$ for some $v \in X_{t'}$,
\item[join] an inner node with exactly two children $t',t''$, such that $X_t = X_{t'} = X_{t''}$.
\end{description}
It is well known that given a tree decomposition $\cT=(T,\{X_t\}_{t \in V(T)})$ of $G$, one can obtain in polynomial time a nice tree decomposition $\cT'=(T',\{X't\}_{t \in V(T')})$ of $G$ with at most $\max_{t \in V(T)} |X_t| \cdot |V(G)|$ nodes and whose every bag is contained in some bag of $\cT$ (see, e.g., the textbook~\cite{MR3380745}, or~\cite{dallard2022firstpaper} for a treatment focused on the independence number).

Consider a graph $G$ and a tree decomposition $\mathcal{T} = (T, \{X_t\}_{t\in V(T)})$ of $G$.
The \emph{independence number} of $\mathcal T$, denoted by $\alpha(\mathcal{T})$, is defined as follows:
\[\alpha(\mathcal{T}) = \max_{t\in V(T)} \alpha(G[X_t])\,.\]

Note that if $\cT$ and $\cT'$ are tree decompositions such that every bag of $\cT'$ is contained in some bag of $\cT$ (in particular, 
if $\cT'$ is a nice tree decomposition obtained from $\cT$), then we have $\alpha(\cT') \leq \alpha(\cT)$.

The \emph{tree-independence number} of $G$, denoted by $\tin(G)$, is the minimum independence number among all possible tree decompositions of $G$.
A graph $G$ is \emph{chordal} if it has no induced subgraphs isomorphic to a cycle of length at least four.
A graph $G$ is chordal if and only if $\tin(G)\le 1$ (see~\cite{dallard2022firstpaper}).

\paragraph{Independent packing of subgraphs.}
Given a graph $G$ and a family $\HH =\{H_j\}_{j\in J}$ of subgraphs of $G$, we denote by $G(\HH)$ the graph with vertex set $J$, in which two distinct elements $i,j\in J$ are adjacent if and only if $H_i$ and $H_j$ either have a vertex in common or there is an edge in $G$ connecting them.
This construction was considered by Cameron and Hell in~\cite{MR2190818}, who focused on the particular case when $\HH$ is the set of all subgraphs of $G$ isomorphic to a member of a fixed family $\mathcal{F}$ of connected graphs; they showed that if $G$ is a chordal graph, then so is $\mathcal{H}(G)$.
Dallard et al.~generalized this result as follows.

\begin{sloppypar}
\begin{lemma}[Dallard et al.~\cite{dallard2022firstpaper}]\label{tin-of-G(H)}
Let $G$ be a graph, let \hbox{$\mathcal{T} = (T, \{X_t\}_{t\in V(T)})$} be a tree decomposition of $G$, and let $\HH =\{H_j\}_{j\in J}$ be a finite family of connected non-null subgraphs of $G$.
Then \hbox{$\mathcal{T}' = \big(T, \{\Bag'_t\}_{t\in V(T)}\big)$} with $\Bag'_t = \{j\in J : V(H_j) \cap X_t \neq \emptyset\}$ for all $t \in V(T)$ is a tree decomposition of $G(\HH)$ such that $\alpha(\mathcal{T}')\le \alpha(\mathcal{T})$.
\end{lemma}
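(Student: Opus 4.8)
The plan is to verify the three defining axioms of a tree decomposition for $\mathcal{T}'$ and then to bound $\alpha(\mathcal{T}')$ node by node; essentially everything is a direct unwinding of the definitions of $G(\HH)$ and of the bags $\Bag'_t$, and the only step that needs slightly more than definition-chasing---hence the one I would flag as the main (and only real) obstacle---is the connectivity axiom, which requires the standard fact that a union of subtrees of a tree whose pairwise-intersection graph is connected is again a subtree.

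For the tree-decomposition axioms: coverage of vertices is immediate, since each $H_j$ is non-null and therefore contains a vertex $v$, and any node $t$ of $T$ with $v \in X_t$ satisfies $j \in \Bag'_t$. For the edge axiom, let $ij \in E(G(\HH))$; by definition of $G(\HH)$, either $H_i$ and $H_j$ share a vertex $v$, in which case any bag $X_t$ containing $v$ has $i,j \in \Bag'_t$, or there is an edge $uv \in E(G)$ with $u \in V(H_i)$ and $v \in V(H_j)$, in which case a bag $X_t$ containing both $u$ and $v$---one exists because $\mathcal{T}$ is a tree decomposition of $G$---has $i,j \in \Bag'_t$. For the connectivity axiom, fix $j \in J$ and observe that $\{t \in V(T) : j \in \Bag'_t\} = \bigcup_{v \in V(H_j)}\{t \in V(T) : v \in X_t\}$ is a union of subtrees of $T$; for every edge $uv \in E(H_j) \subseteq E(G)$ some bag of $\mathcal{T}$ contains both endpoints, so the subtrees for $u$ and for $v$ intersect, whence the graph on $V(H_j)$ joining $u$ and $v$ whenever their subtrees meet contains the connected graph $H_j$, and the quoted fact about unions of subtrees applies.

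Finally, to get $\alpha(\mathcal{T}') \le \alpha(\mathcal{T})$ it suffices to show $\alpha(G(\HH)[\Bag'_t]) \le \alpha(G[X_t])$ for every node $t$. Given an independent set $I \subseteq \Bag'_t$ in $G(\HH)$, choose for each $j \in I$ a vertex $v_j \in V(H_j) \cap X_t$ (possible precisely because $j \in \Bag'_t$). The $v_j$ are pairwise distinct, since $v_i = v_j$ for distinct $i,j$ would mean $H_i$ and $H_j$ share a vertex and hence $ij \in E(G(\HH))$, contradicting independence of $I$; and no two are adjacent in $G$, since an edge $v_iv_j \in E(G)$ would likewise force $ij \in E(G(\HH))$. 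Thus $\{v_j : j \in I\}$ is an independent set of size $|I|$ in $G[X_t]$, so $|I| \le \alpha(G[X_t])$; taking $I$ of maximum size completes the proof.
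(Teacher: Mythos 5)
Your proof is correct: all three tree-decomposition axioms are verified properly (including the connectivity axiom via the standard fact that a union of pairwise-overlapping subtrees of a tree, with connected intersection pattern, is a subtree), and the independence-number bound via choosing a representative $v_j \in V(H_j)\cap X_t$ for each member of an independent set is exactly the right argument. Note that the paper itself does not prove this lemma---it imports it from Dallard et al.~\cite{dallard2022firstpaper}---so there is no in-paper proof to compare against, but your argument is the standard one and would serve as a self-contained justification.
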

\end{sloppypar}

Let us now explain the main algorithmic result related to packing subgraphs from~\cite{dallard2022firstpaper}.
Given a graph $G$ and a family $\HH =\{H_j\}_{j\in J}$ of subgraphs of $G$, a subfamily $\HH'$ of $\HH$ is said to be an \emph{independent $\HH$-packing} in $G$ if every two graphs in $\HH'$ are vertex-disjoint and there is no edge between them, in other words, if $\HH'$ is an independent set in the graph $G(\HH)$.
Assume now that the subgraphs in $\HH$ are equipped with a weight function $\wei:J\to \mathbb{Q}_+$ assigning weight $\wei_j$ to each subgraph $H_j$.
For any set $I\subseteq J$, we define the \emph{weight} of the family $\HH' = \{H_i\}_{i\in I}$ as the sum $\sum_{i\in I}\wei_i$.
In particular, the weight of any independent $\HH$-packing is well-defined.
Given a graph $G$, a finite family $\HH = \{H_j\}_{j\in J}$ of connected non-null subgraphs of $G$, and a weight function $\wei:J\to \mathbb{Q}_+$ on the subgraphs in $\HH$, the \textsc{Max Weight Independent Packing} problem asks to find an independent $\HH$-packing in $G$ of maximum weight.

When $\HH$ is the set of all subgraphs of $G$ isomorphic to a member of a fixed family $\mathcal{F}$ of connected graphs, the problem coincides with the \textsc{Max Weight Independent $\mathcal{F}$-Packing} problem studied in Dallard et al.~\cite{dallard2022firstpaper}.
This latter problem in turn generalizes several problems studied in the literature, including  the \textsc{Independent $\mathcal{F}$-Packing} problem (see~\cite{MR2190818}),  which corresponds to the unweighted case, the \textsc{Max Weight Independent Set} problem, which corresponds to the case $\mathcal{F} = \{K_1\}$, the \textsc{Max Weight Induced Matching} problem (see, e.g.,~\cite{MR3776983,MR4151749}), which corresponds to the case $\mathcal{F} = \{K_2\}$, and the \textsc{Dissociation Set} problem (see, e.g.,~\cite{MR3593941,MR615221,MR2812599}), which corresponds to the case when $\mathcal{F}= \{K_1,K_2\}$ and the weight function assigns to each subgraph $H_j$ the weight equal to $|V(H_j)|$, and the \textsc{$k$-Separator} problem (see, e.g.,~\cite{DBLP:conf/soda/Lee17,MR3987192,MR3296270}), which corresponds to the case when $\mathcal{F}$ contains all connected graphs with at most $k$ vertices, the graph $G$ is equipped with a vertex weight function $\wei:V(G)\to \mathbb{Q}_+$, and the weight function on $\HH$ assigns to each subgraph $H_j$ the weight equal to $\sum_{x\in V(H_j)}\wei(x)$.

Dallard et al.~showed in~\cite{dallard2022firstpaper} that the \textsc{Max Weight Independent Packing} problem can be solved in polynomial time for graphs equipped with a tree decomposition with bounded independence number.

\begin{sloppypar}
\begin{theorem}[Dallard et al.~\cite{dallard2022firstpaper}]\label{max-weight-independent-subgraph-packing-for-bounded-tree-alpha}
Fix an integer $k \geq 1$.
Then, given a graph $G$ and a finite family $\mathcal{H} = \{H_j\}_{j \in J}$ of connected nonnull subgraphs of $G$, the \textsc{Max Weight Independent Packing} problem can be solved in time \hbox{$\mathcal{O}(|J| \cdot ((|J| + |V(T)|) \cdot |V(G)| + |E(G)|+|J|^{k}\cdot|V(T)|)$} if $G$ is given together with a tree decomposition \hbox{$\mathcal{T} = (T, \{\Bag_t\}_{t\in V(T)})$} with independence number at most $k$.
\end{theorem}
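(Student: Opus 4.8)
The plan is to reduce the \textsc{Max Weight Independent Packing} problem to \textsc{Max Weight Independent Set} on the auxiliary graph $G(\HH)$, and then to solve the latter by a bottom-up dynamic programming over a tree decomposition of $G(\HH)$; the whole point is that bounded independence number of that decomposition forces every bag to contain only polynomially many independent sets.

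For the reduction I would invoke \cref{tin-of-G(H)}: since every $H_j$ is a connected non-null subgraph of $G$, the pair $\cT'=\big(T,\{X'_t\}_{t\in V(T)}\big)$ with $X'_t=\{j\in J : V(H_j)\cap \Bag_t\neq\emptyset\}$ is a tree decomposition of $G(\HH)$ with $\alpha(\cT')\le\alpha(\cT)\le k$. By definition an independent $\HH$-packing is exactly an independent set of $G(\HH)$, and its $\wei$-weight equals the weight of the corresponding independent set under the vertex-weight function $j\mapsto\wei_j$ on $V(G(\HH))=J$; hence it suffices to compute a maximum-weight independent set of $G(\HH)$. To do this I would first build $G(\HH)$ and the bags $X'_t$ explicitly: recording for each vertex of $G$ the list of indices $j$ with that vertex in $V(H_j)$ yields all sets $X'_t$ by scanning through the $H_j$'s, and adjacency in $G(\HH)$ is then tested by iterating over the vertices and edges of $G$ (this is where the $(|J|+|V(T)|)\cdot|V(G)|+|E(G)|$ part of the running time comes from). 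Then I would turn $\cT'$ into a nice tree decomposition $\cT''=(T'',\{B_t\}_{t\in V(T'')})$ of $G(\HH)$ with polynomially many nodes via the standard construction cited in \cref{sec:prelim}, which does not increase the independence number.

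The core is the classical tree-decomposition dynamic program for \textsc{Max Weight Independent Set}, run on $\cT''$ rooted arbitrarily. For a node $t$ and every \emph{independent} set $S\subseteq B_t$ of $G(\HH)$, let $c[t,S]$ be the maximum $\wei$-weight of an independent set $U$ of $G(\HH)\big[\bigcup_{t'\in V(T''_t)}B_{t'}\big]$ with $U\cap B_t=S$, and $-\infty$ if none exists. Because $\alpha(G(\HH)[B_t])\le k$, the bag $B_t$ has at most $\sum_{i=0}^{k}\binom{|B_t|}{i}=\O(|J|^{k})$ independent subsets, so each table has polynomial size and can be enumerated efficiently. The transitions are the usual ones: at a \textbf{leaf} with bag $\{j\}$ the entries are $0$ and $\wei_j$; at an \textbf{introduce} node adding $v$, entries with $v\notin S$ are copied from the child, and an entry with $v\in S$ equals the child's entry for $S\setminus\{v\}$ plus $\wei_v$ (no adjacency check is needed, since $S$ is independent and, as usual for introduce nodes, $v$ has no neighbour among the already-processed vertices outside $B_t$); at a \textbf{forget} node removing $v$, $c[t,S]=\max\big(c[t',S],\,c[t',S\cup\{v\}]\big)$, the second term read as $-\infty$ when $S\cup\{v\}$ is not independent; at a \textbf{join} node with children $t',t''$ and $B_t=B_{t'}=B_{t''}$, $c[t,S]=c[t',S]+c[t'',S]-\sum_{j\in S}\wei_j$. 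A straightforward induction on $T''_t$ shows these values are correct, the answer is $\max_S c[r,S]$ at the root $r$, and an optimal packing is recovered by backtracking.

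The running time then follows by summing the construction cost above, the cost of making $\cT'$ nice, and the DP, which spends $\O(|J|^{k})$ time per node of $\cT''$, with the \textbf{join} transition as the bottleneck since it matches entries sharing a trace $S$. I do not expect any step to be genuinely hard; the point requiring the most care---though conceptually simple---is \emph{why} the tables stay polynomial: it is precisely the bound $\alpha(\cT')\le k$ from \cref{tin-of-G(H)} that limits the number of independent subsets of each bag, and for plain treewidth of $G(\HH)$, which may be unbounded, the whole approach would collapse. One should also record that restricting the DP to independent traces $S$ loses no optimal solution, which is immediate because every independent set of $G(\HH)$ meets each bag in an independent set.
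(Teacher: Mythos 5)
This theorem is quoted from Dallard et al.\ and not reproved in the present paper, and your argument reconstructs essentially the proof from that reference: reduce \textsc{Max Weight Independent Packing} to \textsc{Max Weight Independent Set} on $G(\HH)$ via \cref{tin-of-G(H)}, then run the textbook tree-decomposition dynamic program with states restricted to independent subsets of bags, which number only $\O(|J|^{k})$ per bag because each bag induces a subgraph of $G(\HH)$ with independence number at most $k$. The reduction and all four DP transitions are correct; the only point you leave loose is whether routing through a nice tree decomposition of $G(\HH)$ (which may have up to $\max_t |\Bag'_t|\cdot|J|$ nodes) reproduces the exact stated running time rather than a slightly larger polynomial, but this bookkeeping does not affect the substance of the result.
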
 
\end{sloppypar}

\section{Packing subgraphs at even distance}\label{sec:distance}
Given a positive integer $d$, a \emph{distance-$d$ independent set} in a graph $G$ is a set of vertices at pairwise distance at least $d$, that is, an independent set in the graph $G^{d-1}$.
The \textsc{Distance-$d$ Independent Set} problem is the following problem:
Given a graph $G$ and a weight function $\wei : V(G) \to \Q_+$, find a distance-$d$ independent set with maximum total weight.
For $d = 2$, the problem coincides with the classical \textsc{Max Weight Independent Set} problem.

\subsection{Tree-independence number and graph powers}

Balakrishnan and Paulraja proved in~\cite{MR704427} that the class of chordal graphs is closed under taking odd powers.
Combined with the polynomial-time solvability of the \textsc{Max Weight Independent Set} problem in the class of chordal graphs (see, e.g.,~\cite{MR0392683}), this implies that the \textsc{Distance-$d$ Independent Set} problem is solvable in polynomial time in the class of chordal graphs for all even $d\ge 2$.
On the other hand, Eto et al.~\cite{MR3149106} showed that the problem is \NP-complete in the class of chordal graphs for all odd $d\ge 3$.

The result of Balakrishnan and Paulraja was generalized by Duchet as follows.

\begin{theorem}[Duchet~\cite{MR778751}]\label{chordal-from-k-to-k+2}
Let $G$ be a graph and $k$ a positive integer such that $G^k$ is chordal.
Then the graph $G^{k+2}$ is also chordal.
\end{theorem}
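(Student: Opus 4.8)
The plan is to use the characterization of chordal graphs recalled in \cref{sec:prelim}: a graph is chordal if and only if its tree-independence number is at most $1$, which (unwinding the definition) holds exactly when it admits a tree decomposition in which every bag induces a clique, a so-called clique tree. So I would start from a clique tree $\cT = (T, \{X_t\}_{t \in V(T)})$ of $G^k$ and transform it, keeping the underlying tree $T$ fixed, into a clique tree of $G^{k+2}$.

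The construction I would use is to \emph{fatten} every bag by one step in $G$: set
\[
X'_t := X_t \cup \{v \in V(G) : v \text{ has a neighbor in } X_t \text{ in } G\},
\]
that is, $X'_t$ is the set of vertices at distance at most $1$ in $G$ from $X_t$. The first thing to verify is that each $X'_t$ is a clique of $G^{k+2}$: if $u,v \in X'_t$ are distinct, pick $a,b \in X_t$ with $\dist_G(u,a)\le 1$ and $\dist_G(v,b)\le 1$; since $X_t$ is a clique of $G^k$ we have $\dist_G(a,b)\le k$, and the triangle inequality gives $\dist_G(u,v)\le k+2$, so $uv \in E(G^{k+2})$. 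It then remains to check the three tree-decomposition axioms for $\cT' = (T,\{X'_t\}_{t\in V(T)})$ with respect to $G^{k+2}$. Vertex coverage is immediate because $X_t \subseteq X'_t$. For connectivity, write $A_w := \{t \in V(T) : w \in X_t\}$ for the set of nodes whose $\cT$-bag contains $w$, which is connected by the axioms for $\cT$; then $\{t : v \in X'_t\} = A_v \cup \bigcup_{u \in N_G(v)} A_u$, and since every edge $uv$ of $G$ is also an edge of $G^k$, the connected set $A_u$ meets the connected set $A_v$, so this is a union of subtrees all intersecting the common subtree $A_v$, hence connected. For edge coverage, given $uv \in E(G^{k+2})$ take a shortest $u$--$v$ path $p_0 = u, p_1, \dots, p_\ell = v$ in $G$ with $\ell \le k+2$; if $\ell \le 1$ then $uv \in E(G^k)$ and is already covered by a bag of $\cT$, and if $\ell \ge 2$ then $\dist_G(p_1, p_{\ell-1}) \le \ell - 2 \le k$, so $p_1$ and $p_{\ell-1}$ (possibly equal) lie together in some bag $X_t$, whence $u, v \in X'_t$.

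Once these checks go through, $\cT'$ is a tree decomposition of $G^{k+2}$ all of whose bags are cliques, so $\tin(G^{k+2}) \le 1$ and $G^{k+2}$ is chordal. A more hands-on argument deriving a contradiction from a shortest induced cycle of $G^{k+2}$ also seems possible, but I expect it to be messier than the clique-tree route.

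The creative step here is really the choice of the fattened bags; the vertex- and edge-coverage checks are routine, and the one point that genuinely needs care is the connectivity axiom, where one must use that every $G$-edge already appears inside $E(G^k)$ in order to glue each subtree $A_u$ onto $A_v$ — without that observation the fattened sets need not form a tree decomposition at all. Finally, I expect this argument to extend verbatim to the stronger statement announced in \cref{sec:intro}: starting from an arbitrary tree decomposition of $G^k$ and again setting $X'_t := X_t \cup N_G(X_t)$, an independent set of $G^{k+2}$ contained in $X'_t$ projects injectively to an independent set of $G^k$ contained in $X_t$ (two of its vertices cannot project to the same vertex of $X_t$, nor to two vertices at $G$-distance $\le k$, by the same triangle-inequality bound), so $\alpha(G^{k+2}[X'_t]) \le \alpha(G^k[X_t])$ and hence $\tin(G^{k+2}) \le \tin(G^k)$, with Duchet's theorem being the case $\tin(G^k) \le 1$.
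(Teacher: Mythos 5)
Your proof is correct and is essentially the paper's argument: the paper establishes the stronger bound $\tin(G^{k+2})\le\tin(G^k)$ (\cref{tin-from-k-to-k+2}) using exactly your fattened bags $X'_t=\{v: \dist_G(v,X_t)\le 1\}$, obtained there as the induced decomposition of $G^k(\mathcal{H})$ for the family $\mathcal{H}$ of closed neighbourhoods, via the identity $G^{k+2}=G^k(\mathcal{H})$ (\cref{equality-between-G^k+2d-and-G^k(H)}) together with \cref{tin-of-G(H)}. The only difference is presentational: you verify the tree-decomposition axioms and the clique/independence-number bound by hand, whereas the paper imports them from the cited lemma; your closing paragraph even recovers the general $\tin$ statement the same way.
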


In terms of tree-independence number, \cref{chordal-from-k-to-k+2} can be phrased as follows.

\begin{theorem}
Let $G$ be a graph and $k$ a positive integer such that $\tin(G^k)\le 1$.
Then $\tin(G^{k+2})\le 1$.
\end{theorem}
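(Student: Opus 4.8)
The plan is simply to observe that this statement is \cref{chordal-from-k-to-k+2} read through the lens of tree-independence number. The preliminaries record the equivalence that a graph $H$ satisfies $\tin(H)\le 1$ if and only if $H$ is chordal, so I would argue as follows. Suppose $\tin(G^k)\le 1$. Then $G^k$ is chordal, so \cref{chordal-from-k-to-k+2} applies and yields that $G^{k+2}$ is chordal as well, and therefore $\tin(G^{k+2})\le 1$. That is the whole argument: there is essentially nothing to overcome, since the equivalence ``$\tin(\cdot)\le 1 \Leftrightarrow$ chordal'' is used only as a black box, once in each direction.

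The reason to isolate this reformulation is that it is the base case of the much more general phenomenon this section is really about, namely that for every fixed positive integer $\ell$ the class of graphs with tree-independence number at most $\ell$ is well behaved under the operation $G\mapsto G^{k+2}$ relative to $G\mapsto G^{k}$. For $\ell\ge 2$ one can no longer hide behind chordality, and the argument has to descend to the level of tree decompositions: I would start from a tree decomposition $\mathcal{T}$ of $G^k$ with $\alpha(\mathcal{T})\le \ell$, keep the same underlying tree, and enlarge each bag in a controlled way so as to obtain a tree decomposition of $G^{k+2}$. The crux will be to verify that the enlarged bags still have independence number at most $\ell$, which should follow from a Helly-type argument on balls of $G$ combined with the separator structure supplied by $\mathcal{T}$ — this is the analogue of the mechanism underlying Duchet's proof in the chordal case, and it is the step I expect to be the main obstacle for the general result. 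For the present statement, however, invoking \cref{chordal-from-k-to-k+2} directly is all that is needed.
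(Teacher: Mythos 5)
Your proof is correct and matches the paper's intent exactly: the paper states this theorem without proof precisely because it is just \cref{chordal-from-k-to-k+2} translated via the equivalence that a graph is chordal if and only if its tree-independence number is at most one. Your additional remarks about the general case $\ell\ge 2$ are not needed here, but they correctly anticipate that the full generalization (\cref{tin-from-k-to-k+2}) requires working directly with tree decompositions rather than with chordality.
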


We now generalize this result as follows.

\begin{theorem}\label{tin-from-k-to-k+2}
Let $G$ be a graph and $k$ a positive integer.
Then $\tin(G^{k+2})\le \tin(G^k)$.
\end{theorem}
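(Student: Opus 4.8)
The plan is to take an optimal tree decomposition $\mathcal{T} = (T, \{X_t\}_{t \in V(T)})$ of $G^k$ with $\alpha(\mathcal{T}) = \tin(G^k) =: \ell$, and to build from it a tree decomposition $\mathcal{T}'$ of $G^{k+2}$ on the same tree $T$ with $\alpha(\mathcal{T}') \le \ell$. The natural candidate for the new bags is to take a suitable ``closed neighborhood'' of each old bag: for instance, setting $X'_t = \bigcup_{v \in X_t} N_{G^k}[v]$, or, more carefully, $X'_t = \{v \in V(G) : \dist_G(v, X_t) \le 1\}$ might be too coarse; the right dilation should be something like ``all vertices within $G$-distance $\lceil (k+2)/2 \rceil$ of\dots'' — but the cleanest version is likely to pick, for each $t$, the set $X'_t = \{v : v \text{ has a $G$-neighbor, or equals a vertex, that lies in } X_t\}$, i.e.\ the $1$-ball around $X_t$ in $G$. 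The intuition is Duchet's: an edge $uv$ of $G^{k+2}$ corresponds to a $G$-path of length $\le k+2$; its midpoint region is at distance $\le k$ (in $G$) from both a vertex near $u$ and a vertex near $v$, which lets us route $u$ and $v$ into a common bag of $\mathcal{T}$ after the $1$-ball dilation. I would first verify the three tree-decomposition axioms for $\mathcal{T}'$: vertex coverage is immediate since $X_t \subseteq X'_t$; connectivity of $\{t : v \in X'_t\}$ holds because this set is a union of the connected subtrees $\{t : u \in X_t\}$ over the finitely many $u \in N_G[v]$, and these subtrees pairwise intersect (each pair $u, u'$ is adjacent or equal in $G$, hence in $G^k$, hence co-occurs in a bag of $\mathcal{T}$), so their union is connected; edge coverage is the substantive point handled below.

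The heart of the argument is two claims. First, the \emph{edge-coverage claim}: for every edge $uv \in E(G^{k+2})$, there is a node $t$ with $\{u,v\} \subseteq X'_t$. Here I would fix a shortest $u$–$v$ path $P$ in $G$ of length $r \le k+2$, and split into cases on the parity and size of $r$. If $r \le k$ then $uv \in E(G^k)$ and we are done with $X_t \subseteq X'_t$. Otherwise $r \in \{k+1, k+2\}$; take the vertex $w$ on $P$ at $G$-distance roughly $r/2$ from $u$ (choosing the two ``central'' vertices carefully so that one of them, call it $w$, satisfies $\dist_G(u, w') \le k$ for a $G$-neighbor $w'$ of $w$ toward $u$, and $\dist_G(w, v) \le k$, or a symmetric statement) — the key numerical fact is that when $r \le k+2$ one can choose a vertex $w$ on $P$ and a neighbor so that both $u$ and $v$ are within $G^k$-distance-$0$, i.e.\ $G^k$-adjacent, to vertices in $N_G[w]$. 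Concretely: pick $w$ so that $\dist_G(u,w) \le k$ and $\dist_G(v,w) \le k+2 - \dist_G(u,w)$, and if the latter exceeds $k$ replace $w$ by its $P$-successor $w'$ (which still has $\dist_G(u,w') \le k$ since $r - 1 \le k+1$\dots) — this bookkeeping is exactly the routine part I will not grind through, but it yields that $u, w \in V$ are $G^k$-adjacent (or equal) and $v, w$ are $G^k$-adjacent (or equal) for a common $w$, hence $uw, wv \in E(G^k)$ (or trivial), so $u \in N_{G^k}[w]$ and $v \in N_{G^k}[w]$; then since $w \in X_t$ for some $t$, and\dots — wait, this gives $u,v \in X'_t$ only if $X'_t$ is the $G^k$-ball, not the $G$-ball. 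So the correct definition is $X'_t = \bigcup_{w \in X_t} N_{G^k}[w]$, i.e.\ the closed $G^k$-neighborhood; I would use that throughout, re-checking connectivity with $G^k$ in place of $G$ (still fine, since $N_{G^k}[w]$ is a clique in $G^{2k} \supseteq$\dots — actually pairs in $N_{G^k}[w]$ are $G^{2k}$-adjacent, not necessarily $G^k$-adjacent, so the connectivity argument needs that for each fixed $v$, the vertices $w$ with $v \in N_{G^k}[w]$ are pairwise $G^k$-co-bagged; this holds because such $w$ all lie in $N_{G^k}[v]$, which is \emph{not} a clique in $G^k$ in general — so here is the real subtlety).

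Thus the genuine obstacle, as in Duchet's proof, is the connectivity axiom, not edge coverage: after dilating to $G^k$-balls, the set of nodes containing a fixed vertex $v$ is a union of subtrees indexed by $w \in N_{G^k}[v]$, but these $w$'s need not be pairwise $G^k$-adjacent, so their subtrees need not pairwise intersect and the union may be disconnected. Duchet's resolution for the chordal ($\ell = 1$) case exploits clique-tree structure; for general $\ell$ I expect one must be cleverer — perhaps \emph{not} taking the full $G^k$-ball but a one-sided or ``forked'' dilation (the \verb|\forked| macro in the preamble is suggestive), or choosing for each $v$ a canonical $w$ and routing only through its subtree. The cleanest fix I anticipate: define $X'_t = X_t \cup \{v : v \text{ has a $G$-neighbor in } X_t \text{ that is $G^{k-1}$-dominated by}\dots\}$ — more plausibly, replace each bag by $X_t$ together with, for each ordered pair, a single witness; or use the standard trick of subdividing: build $\mathcal{T}'$ so that $v$ is placed in $X'_t$ iff some specific ``anchor'' $\mathrm{anc}(v) \in X_t$, chosen once per $v$ and per connected ``region'' of $T$. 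I would handle this by first proving the edge-coverage and $\alpha$-bound claims for the naive $G^k$-ball dilation (the $\alpha$-bound because $X'_t \subseteq N_{G^k}[X_t]$ and any independent set of $G^{k+2}$ inside $N_{G^k}[X_t]$ injects into an independent set of $G^k$ inside $X_t$ by mapping each vertex to a chosen bag-neighbor — this step also needs a short argument that distinct independent vertices map to distinct and non-adjacent anchors, which is where $G^{k+2}$ versus $G^k$ is used), then repairing connectivity by the anchor/region argument. The main write-up effort, and the step most likely to hide a real difficulty, is this connectivity repair; everything else is the kind of metric bookkeeping that Duchet's original argument already models.
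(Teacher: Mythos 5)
Your first instinct --- dilate each bag of an optimal tree decomposition of $G^k$ by a distance-one ball \emph{in $G$}, i.e.\ set $X'_t=\{v: N_G[v]\cap X_t\neq\emptyset\}$ --- is exactly the paper's construction (there it is packaged as \cref{tin-of-G(H)} applied to the family $\HH=\{H_v\}_{v\in V}$ of closed neighborhoods, together with the identity $G^{k+2}=G^k(\HH)$ from \cref{equality-between-G^k+2d-and-G^k(H)} with $d=1$). The genuine gap is that you abandon this construction for a wrong reason. For edge coverage you insist on finding a \emph{single} vertex $w$ with $u,v\in N_{G^k}[w]$; that indeed fails (already for $k=1$ and $u,v$ the endpoints of a $P_4$), and it is what pushes you to the $G^k$-ball dilation, whose connectivity then genuinely breaks. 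But no single hub is needed: take a $G$-path $u=u_0,u_1,\ldots,u_r=v$ with $r\le k+2$ (the case $r\le k$ is trivial since $X_t\subseteq X'_t$), and use the \emph{two} witnesses $u_1\in N_G[u]$ and $u_{r-1}\in N_G[v]$. Since $\dist_G(u_1,u_{r-1})\le r-2\le k$, either $u_1=u_{r-1}$ or $u_1u_{r-1}\in E(G^k)$, so some bag $X_t$ of the decomposition of $G^k$ contains both of them, and then $u,v\in X'_t$. With this, the $1$-ball dilation already works and the entire ``connectivity repair'' you defer to is unnecessary: for fixed $v$, the nodes $t$ with $v\in X'_t$ are exactly those whose bag meets $N_G[v]$, and $N_G[v]$ induces a connected subgraph of $G$, hence of $G^k$, so these nodes form a subtree. (Your justification of this point also has a small slip: two vertices $u,u'\in N_G[v]$ need not be adjacent in $G$, so their subtrees need not pairwise intersect; connectivity of the union follows instead because each subtree $\{t:u\in X_t\}$ for $u\in N_G(v)$ meets the subtree of $v$ itself.) Your sketch of the independence-number bound --- mapping each vertex of an independent set of $G^{k+2}$ in $X'_t$ to an anchor in $N_G[v]\cap X_t$ and checking that anchors are distinct and $G^k$-nonadjacent --- is correct for the $1$-ball version and is where $k+2$ versus $k$ enters ($1+k+1=k+2$). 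As written, however, the proposal ends without a proof: it correctly identifies the construction, discards it, and replaces it with unresolved speculation about anchors and forked dilations.
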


The proof of \cref{tin-from-k-to-k+2} will be based on the following lemma, the case $d = 1$ of which was observed by Duchet in~\cite{MR778751}.

\begin{sloppypar}
\begin{lemma}\label{equality-between-G^k+2d-and-G^k(H)}
Let $G=(V,E)$ be a graph, and $k$ and $d$ positive integers.
For $v\in V$, let $H_v$ be the subgraph of $G$ induced by the vertices at distance at most $d$ from $v$, and let $\mathcal{H} = \{H_v\}_{v\in V}$.
Then $G^{k+2d} = G^k(\mathcal{H})$.
\end{lemma}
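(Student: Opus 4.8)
The plan is to verify the identity $G^{k+2d}=G^k(\HH)$ directly, by showing the two graphs have the same vertex set and the same edge set. Since the powers $G^{k}$ and $G^{k+2d}$ are taken on the vertex set $V$, and the family $\HH$ is indexed by $V$, both $G^{k+2d}$ and $G^k(\HH)$ have vertex set $V$. (Each $H_v$ is nonnull, as $v\in V(H_v)$, and connected, since any vertex within distance $d$ of $v$ is joined to $v$ by a shortest path lying entirely inside the ball $V(H_v)$; this makes $G^k(\HH)$ a legitimate instance of the construction in \cref{tin-of-G(H)}, though it is not needed for the set equality.) Thus the task reduces to proving, for distinct $u,v\in V$, that $\dist_G(u,v)\le k+2d$ holds if and only if either $V(H_u)\cap V(H_v)\ne\emptyset$ or there is an edge of $G^k$ with one endpoint in $V(H_u)$ and the other in $V(H_v)$ --- which is precisely the condition for $uv\in E(G^k(\HH))$, recalling that $V(H_w)=\{x\in V:\dist_G(x,w)\le d\}$.

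For the ``only if'' direction I would take distinct $u,v$ with $\dist_G(u,v)=\ell\le k+2d$ and fix a shortest $u$--$v$ path $w_0=u,w_1,\dots,w_\ell=v$; note that each $w_i$ is at distance $i$ from $u$ and $\ell-i$ from $v$. If $\ell\le 2d$, then $w_{\lfloor \ell/2\rfloor}$ lies in $V(H_u)\cap V(H_v)$, since its distances to $u$ and $v$ are $\lfloor \ell/2\rfloor\le d$ and $\lceil \ell/2\rceil\le d$. If $\ell>2d$, put $a=w_d$ and $b=w_{\ell-d}$; then $a\in V(H_u)$, $b\in V(H_v)$, $a\ne b$, and $\dist_G(a,b)=\ell-2d\in\{1,\dots,k\}$, so $ab\in E(G^k)$. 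In either case $uv\in E(G^k(\HH))$. The ``if'' direction is just the triangle inequality: if $w\in V(H_u)\cap V(H_v)$ then $\dist_G(u,v)\le\dist_G(u,w)+\dist_G(w,v)\le 2d\le k+2d$; and if $a\in V(H_u)$, $b\in V(H_v)$ with $\dist_G(a,b)\le k$, then $\dist_G(u,v)\le\dist_G(u,a)+\dist_G(a,b)+\dist_G(b,v)\le d+k+d=k+2d$. Either way $uv\in E(G^{k+2d})$.

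The only point that needs any care --- the ``main obstacle'', such as it is --- is the case split $\ell\le 2d$ versus $\ell>2d$ in the forward direction: for small $\ell$ the two balls already overlap and no genuine $G^k$-edge between them need exist, so one must produce a common vertex rather than an edge, while for $\ell>2d$ one walks $d$ steps in from each end and the remaining gap is a $G^k$-edge. Everything else is immediate from the definition of $G^k(\HH)$ and the triangle inequality. Specializing to $d=1$ recovers the observation of Duchet~\cite{MR778751}.
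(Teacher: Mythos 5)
Your proof is correct and follows essentially the same route as the paper's: both argue the edge-set equality directly, splitting the forward direction into the cases $\dist_G(u,v)\le 2d$ (produce a common vertex of the two balls) and $\dist_G(u,v)>2d$ (walk $d$ steps in from each end of a shortest path to get a $G^k$-edge), and both use the triangle inequality for the converse. The only cosmetic difference is that you work with a shortest path throughout, so the distance $\ell-2d$ between $w_d$ and $w_{\ell-d}$ is exact rather than merely bounded between $1$ and $k$.
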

\end{sloppypar}

\begin{proof}
Let us first observe that for every vertex $v$ of $G$, the graph $H_v$ is also a subgraph of $G^k$.
Hence, the graph $G^k(\mathcal{H})$ is well-defined.
The graphs $G^{k+2d}$ and $G^k(\mathcal{H})$ both have the same vertex set, namely $V$.

As for the edge set, consider two distinct vertices $u$ and $v$ in $V$. 
Assume first that $u$ and $v$ are adjacent in $G^k(\mathcal{H})$.
Then the graphs $H_u$ and $H_v$ either have a vertex in common, or they are vertex-disjoint but there is an edge between them in $G^k$.
If $H_u$ and $H_v$ have a vertex in common, say $x\in V(H_u)\cap V(H_v)$, then $\dist_G(u,v)\le \dist_G(u,x)+ \dist_G(x,v) \le 2d\le k+2d$, and hence $u$ and $v$ are adjacent in $G^{k+2d}$.
If $H_u$ and $H_v$ are vertex-disjoint but there exist vertices $x\in V(H_u)$ and $y\in V(H_v)$ such that $\{x,y\}\in E(G^k)$, then $\dist_G(u,v)\le \dist_G(u,x)+ \dist_G(x,y) + \dist_G(y,v)\le d+k+d= k+2d$, and hence $u$ and $v$ are adjacent in $G^{k+2d}$.

Assume now that $u$ and $v$ are adjacent in $G^{k+2d}$.
Then $\dist_G(u,v)\le k+2d$.
We want to show that the graphs $H_u$ and $H_v$ either have a vertex in common, or they are vertex-disjoint but there is an edge between them in $G^k$.
If $u$ and $v$ at distance at most $2d$ in $G$, then there exists a vertex $x\in V$ that is at distance in $G$ at most $d$ from each of $u$ and $v$, and hence the graphs $H_u$ and $H_v$ have a vertex in common.
We may thus assume that $\dist_G(u,v)\ge 2d+1$ or, equivalently, that the graphs $H_u$ and $H_v$ are vertex-disjoint.
Since $2d+1\le \dist_G(u,v)\le k+2d$, there exists a $u$-$v$-path $P = (u = u_0,u_1,\ldots, u_q = v)$ in $G$ such that $2d+1\le q\le k+2d$.
This implies that $1\le \dist_G(u_d,u_{q-d})\le k$ and hence the vertices $u_d$ and $u_{q-d}$ are adjacent in $G^k$.
Since $u_d$ is at distance at most $d$ from $u$, it belongs to $H_u$.
Similarly, $u_{q-d}$ belongs to $H_v$.
Thus, $\{u_d,u_{q-d}\}$ is an edge in $G^k$ between the subgraphs $H_u$ and $H_v$.
This shows that $u$ and $v$ are adjacent in $G^k(\mathcal{H})$.

We showed that the graphs $G^{k+2d}$ and $G^k(\mathcal{H})$ have the same vertex sets and the same edge sets. 
Hence, they are the same.
\end{proof}

\begin{proof}[Proof of \cref{tin-from-k-to-k+2}.]
Let $G=(V,E)$ be a graph and $k$ a positive integer.
For $v\in V$, let $H_v$ be the subgraph of $G$ induced by $v$ and its neighbors, and let $\mathcal{H} = \{H_v\}_{v\in V}$.
Let $q = \tin(G^k)$ and let \hbox{$\mathcal{T} = (T, \{X_t\}_{t\in V(T)})$} be a tree decomposition of $G^k$ such that the independence number of $\mathcal{T}$ equals $q$.
Let \hbox{$\mathcal{T}' = (T, \{\Bag'_t\}_{t\in V(T)})$} with $\Bag'_t = \{v\in V : V(H_v) \cap X_t \neq \emptyset\}$ for all $t \in V(T)$.
Since for each $v\in V$, the graph $H_v$ is a connected non-null subgraph of $G^k$, \cref{tin-of-G(H)} implies that $\mathcal{T}'$ is a tree decomposition of $G^k(\HH)$ such that $\alpha(\mathcal{T}')\le \alpha(\mathcal{T}) = q$.
Since $G^{k+2} = G^k(\mathcal{H})$ by \cref{equality-between-G^k+2d-and-G^k(H)}, we conclude that $\tin(G^{k+2})\le \alpha(\mathcal{T}') \le q = \tin(G^k)$.
\end{proof}

\cref{tin-from-k-to-k+2} has the following consequence.

\begin{corollary}\label{tin-can-only-go-down-when-taking-odd-powers}
Let $G$ be a graph and $k$ an odd positive integer.
Then $\tin(G^{k})\le \tin(G)$.
\end{corollary}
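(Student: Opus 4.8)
The plan is to deduce the corollary from \cref{tin-from-k-to-k+2} by a simple induction on the odd integer $k$, stepping by $2$. For the base case $k = 1$ there is nothing to do: $G^1 = G$, so $\tin(G^1) = \tin(G)$.

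For the inductive step I would fix an odd integer $k \ge 3$ and assume the statement for $k - 2$. Since $k \ge 3$, the integer $k - 2$ is again an odd positive integer, so the induction hypothesis applies and gives $\tin(G^{k-2}) \le \tin(G)$. Now apply \cref{tin-from-k-to-k+2} with $k - 2$ playing the role of its parameter; this is legitimate because $k - 2 \ge 1$ is a positive integer, and it yields $\tin(G^{k}) = \tin(G^{(k-2)+2}) \le \tin(G^{k-2})$. Combining the two inequalities gives $\tin(G^{k}) \le \tin(G)$, which closes the induction.

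There is essentially no obstacle left at this stage: all the substance lives in \cref{tin-from-k-to-k+2} (which in turn rests on \cref{equality-between-G^k+2d-and-G^k(H)} and \cref{tin-of-G(H)}). The only point requiring a moment's care is that the hypothesis of \cref{tin-from-k-to-k+2} demands a \emph{positive} integer power, which is precisely why the induction is set up to increase the exponent by $2$ from the base value $1$, so that every intermediate exponent $1, 3, 5, \dots, k$ is a valid positive integer to which the theorem may be applied.
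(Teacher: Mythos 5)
Your proof is correct and matches the paper's intended argument: the paper states the corollary as an immediate consequence of \cref{tin-from-k-to-k+2} without writing out the induction, and your step-by-two induction from the trivial base case $k=1$ is exactly the argument being left implicit.
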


We will also make use of the following algorithmic version of \cref{tin-can-only-go-down-when-taking-odd-powers}.

\begin{sloppypar}
\begin{corollary}\label{tin-can-only-go-down-when-taking-odd-powers-algorithmic}
For every positive odd integer $k$, there exists an algorithm that takes as input a graph $G$ and a tree decomposition \hbox{$\mathcal{T} = (T, \{\Bag_t\}_{t\in V(T)})$} of $G$, and computes in time $\mathcal{O}\left((|V(G)|+|V(T)|)\cdot(|V(G)|+|E(G)|)\right)$ the graph $G^k$ and a tree decomposition \hbox{$\mathcal{T}' = (T, \{\Bag'_t\}_{t\in V(T)})$} of $G^k$ such that $\alpha(\mathcal{T}')\le \alpha(\mathcal{T})$.
\end{corollary}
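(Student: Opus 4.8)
Write $k = 2m+1$ with $m \ge 0$; if $m = 0$ the algorithm returns $(G,\mathcal{T})$ unchanged, so assume $m \ge 1$. The plan is to invoke \cref{equality-between-G^k+2d-and-G^k(H)} and \cref{tin-of-G(H)} once, for a single well-chosen family of subgraphs. For $v\in V(G)$, let $H_v$ be the subgraph of $G$ induced by the vertices at distance at most $m$ from $v$ in $G$, and let $\mathcal{H}=\{H_v\}_{v\in V(G)}$. Each $H_v$ is a non-null connected subgraph of $G$: it contains $v$, and any shortest path in $G$ from $v$ to a vertex at distance at most $m$ has all of its vertices at distance at most $m$ from $v$, hence inside $H_v$. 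Applying \cref{equality-between-G^k+2d-and-G^k(H)} to $G$ with the role of ``$k$'' played by $1$ and the role of ``$d$'' played by $m$ gives $G^{2m+1} = G^1(\mathcal{H}) = G(\mathcal{H})$, that is, $G^k = G(\mathcal{H})$. Therefore \cref{tin-of-G(H)}, applied to $G$, $\mathcal{T}$, and $\mathcal{H}$, produces a tree decomposition $\mathcal{T}' = (T, \{X'_t\}_{t\in V(T)})$ of $G(\mathcal{H}) = G^k$, where $X'_t = \{v\in V(G) : V(H_v)\cap X_t \ne \emptyset\}$, satisfying $\alpha(\mathcal{T}')\le\alpha(\mathcal{T})$. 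Thus correctness follows directly from the two lemmas, and it only remains to implement the construction within the stated time bound.

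To output $G^k$, I would run a breadth-first search from each vertex of $G$, truncated at depth $k$, which reads off that vertex's adjacency list in $G^k$; this takes $\mathcal{O}(|V(G)|\cdot(|V(G)|+|E(G)|))$ time, and writing out the at most $\binom{|V(G)|}{2}$ edges of $G^k$ is absorbed in this. For the tree decomposition, note that $v\in X'_t$ exactly when some vertex of $X_t$ lies at distance at most $m$ from $v$ in $G$; equivalently, $X'_t$ is the set of vertices reachable within $m$ steps from the vertex set $X_t$. Hence each bag $X'_t$ can be computed by a single multi-source breadth-first search started simultaneously from all vertices of $X_t$ and truncated at depth $m$, which runs in time $\mathcal{O}(|V(G)|+|E(G)|)$, the standard bound for breadth-first search. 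Over all nodes $t\in V(T)$ this is $\mathcal{O}(|V(T)|\cdot(|V(G)|+|E(G)|))$. Adding the two contributions yields the claimed running time $\mathcal{O}\big((|V(G)|+|V(T)|)\cdot(|V(G)|+|E(G)|)\big)$, the hidden constant depending only on $k$ (through $m$).

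The step I expect to require the most care is precisely this running-time bookkeeping for the bags $X'_t$. A naive implementation that first computes, for every vertex $v$, the ball $V(H_v)$ and then sets $X'_t = \bigcup_{u\in X_t} V(H_u)$ would spend $\Theta\big(\sum_t\sum_{u\in X_t}|V(H_u)|\big)$ time, which can be as large as $\Theta(|V(T)|\cdot|V(G)|^2)$ when the input bags are large, exceeding the target; computing $X'_t$ by one truncated multi-source traversal per node avoids this. The same phenomenon arises if one instead builds $\mathcal{T}'$ by iterating the single-neighborhood step from the proof of \cref{tin-from-k-to-k+2} a total of $m$ times, using $X^{(i+1)}_t := N_G[X^{(i)}_t]$: each step must be charged to $\sum_{u\in S}\deg_G(u)\le 2|E(G)|$ so that it costs $\mathcal{O}(|V(G)|+|E(G)|)$ per node, i.e., one must expand along the edges of the original graph $G$ rather than of the possibly far denser power $G^{2i-1}$. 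Either way, the structural content is entirely contained in \cref{equality-between-G^k+2d-and-G^k(H)} and \cref{tin-of-G(H)}, and the work lies only in a careful but routine implementation.
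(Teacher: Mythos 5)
Your proposal is correct and matches the paper's own proof essentially step for step: the same family $\mathcal{H}$ of balls of radius $(k-1)/2$, the same invocation of \cref{equality-between-G^k+2d-and-G^k(H)} with the inner exponent equal to $1$, the same use of \cref{tin-of-G(H)}, and the same BFS-based implementation (your multi-source BFS from $X_t$ is exactly the paper's BFS from an auxiliary vertex adjacent to all of $X_t$). No discrepancies to report.
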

\end{sloppypar}

\begin{proof}
The statement is trivial for $k = 1$, so let us assume that $k\ge 3$.
Let $d = (k-1)/2$.
For $v\in V(G)$, let $H_v$ be the subgraph of $G$ induced by the vertices at distance at most $d$ from $v$ in $G$, and let $\mathcal{H} = \{H_v\}_{v\in V(G)}$.
\Cref{equality-between-G^k+2d-and-G^k(H)} implies that $G^{k} = G(\mathcal{H})$.
Let \hbox{$\mathcal{T}' = (T, \{\Bag'_t\}_{t\in V(T)})$} with $\Bag'_t = \{v\in V(G) : V(H_v) \cap X_t \neq \emptyset\}$ for all $t \in V(T)$.
Since for each $v\in V(G)$, the graph $H_v$ is a connected non-null subgraph of $G$, \cref{tin-of-G(H)} implies that $\mathcal{T}'$ is a tree decomposition of $G^k$ such that $\alpha(\mathcal{T}')\le \alpha(\mathcal{T})$.
Note that for a vertex $v\in V(G)$, the condition $V(H_v) \cap X_t \neq \emptyset$ is equivalent to the condition that $v$ is at distance at most $d$ in $G$ from $X_t$.
Thus, each bag $\Bag'_t$ can be computed in time $\mathcal{O}(|V(G)|+|E(G)|)$ using a BFS traversal up to distance $d+1$ from a new vertex adjacent to all vertices of the bag $X_t$.
The graph $G^k$ can be computed using a BFS traversal from each vertex $v\in V(G)$ in time 
$\mathcal{O}(|V(G)|(|V(G)|+|E(G)|))$.
 The total time complexity of this approach is $\mathcal{O}\left((|V(G)|+|V(T)|)\cdot(|V(G)|+|E(G)|)\right)$.
\end{proof}


Finally, let us remark that a result similar to \cref{tin-can-only-go-down-when-taking-odd-powers} does not hold for even powers.

\begin{proposition}\label{prop:no-bound}
Fix an even positive integer $k$.
Then there is no function $f$ satisfying 
\[\tin(G^{k})\le f(\tin(G))\] for all graphs $G$.
\end{proposition}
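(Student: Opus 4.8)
The plan is to produce, for every $n$, a \emph{chordal} graph $G_n$ (so that $\tin(G_n)\le 1$) such that $\tin(G_n^{k})$ grows without bound; then no function $f$ can satisfy $\tin((G_n)^{k})\le f(\tin(G_n))=f(1)$ for all $n$. Since the tree-independence number is monotone under induced subgraphs — restrict every bag of a tree decomposition to the vertex set of the subgraph — it suffices to make $G_n^{k}$ contain, as an \emph{induced} subgraph, some graph $H_n$ with $\tin(H_n)\to\infty$. I would take $H_n=K_{n,n}$, recording (via a short Helly-type argument on the subtrees of a tree decomposition corresponding to the two sides of the bipartition) that $\tin(K_{n,n})=n$; only the lower bound $\tin(K_{n,n})\ge n$ is needed.

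The core of the argument is a construction that realizes an arbitrary graph $H$ without isolated vertices as an induced subgraph of the $k$-th power of a chordal graph, for any even $k=2d$. Let $G=G(H,d)$ be built as follows: to each vertex $u\in V(H)$ attach a pendant path with vertices $p_u^{0},p_u^{1},\dots,p_u^{d-1}$ where $p_u^{0}=u$, and write $a_u:=p_u^{d-1}$; for each edge $e=uv\in E(H)$ add a new vertex $c_e$ adjacent to $a_u$ and $a_v$; and finally make $C=\{c_e : e\in E(H)\}$ a clique. (For $d=1$ this degenerates to the split graph with independent side $V(H)$ and clique side $C$.)

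Two checks remain. For chordality: $G[C\cup\{a_u : u\in V(H)\}]$ is a split graph, hence chordal, and $G$ arises from it by attaching pendant paths whose vertices are simplicial when added in the right order, so $G$ is chordal and $\tin(G)\le 1$. For distances: between distinct $u,v\in V(H)$ every $p_u^{0}$--$p_v^{0}$ path is forced to traverse the two pendant paths in full (they attach only at $a_u$, $a_v$), contributing $2(d-1)$ edges, and one checks that $\dist_G(a_u,a_v)=2$ when $uv\in E(H)$ (through $c_{uv}$) while $\dist_G(a_u,a_v)=3$ when $uv\notin E(H)$ (through two distinct $c$-vertices joined by a clique edge, no shorter route being available). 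Hence $\dist_G(p_u^{0},p_v^{0})$ equals $2d$ or $2d+1$ according to whether $uv\in E(H)$, so $\dist_G(p_u^{0},p_v^{0})\le k$ exactly when $uv\in E(H)$; that is, $G^{k}$ restricted to $\{p_u^{0}:u\in V(H)\}$ is isomorphic to $H$.

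Assembling the pieces: for $H=K_{n,n}$ the graph $G=G(K_{n,n},d)$ is chordal, hence $\tin(G)=1$, whereas $\tin(G^{k})\ge \tin(K_{n,n})=n$, and the proposition follows. I expect the only real obstacle to be calibrating the construction so that adjacency in $H$ becomes distance exactly $k$ while non-adjacency becomes distance exactly $k+1$ and the host stays chordal; the clique on the edge-vertices $C$ is the device that does both jobs at once — it destroys the long induced cycles that a plain edge-subdivision of $H$ would create, and at the same time it keeps non-adjacent vertices of $H$ at distance precisely $k+1$ rather than farther apart — and the lengths of the pendant paths are tuned to turn the distance-$2$ (adjacent) / distance-$3$ (non-adjacent) dichotomy at the $a$-vertices into the distance-$2d$ / distance-$(2d+1)$ dichotomy we need.
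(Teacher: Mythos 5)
Your proposal is correct and is essentially identical to the paper's proof: both realize an arbitrary graph $H$ (taken to be $K_{n,n}$) as an induced subgraph of $G^k$ for a chordal $G$ built by subdividing the edges of $H$, turning the subdivision vertices into a clique, and attaching pendant paths of length $(k-2)/2$ to the original vertices; your graph $G(H,d)$ is the paper's construction up to relabeling which endpoint of each pendant path carries the copy of $H$.
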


\begin{proof}
Since the tree-independence number of chordal graphs is at most one, it suffices to show that the 
$k$-th powers of chordal graphs have arbitrarily large tree-independence number.
To this end, we show that for every graph $H$ there exists a chordal graph $G$ such that $G^k$ contains an induced subgraph isomorphic to $H$.
This will suffice, since there exist graphs with arbitrarily large tree-independence numbers (for example complete bipartite graphs $K_{n,n}$, see~\cite{dallard2022firstpaper}).
First, let $\widehat H$ be the graph obtained from $H$ by subdividing each edge exactly once and adding edges so that the newly introduced vertices form a clique.
Then $V(H)$ is an independent set in $\widehat H$ and $V(\widehat H)\setminus V(H)$ is a clique in $\widehat H$.
Hence $\widehat H$ is a split graph and thus a chordal graph.
We obtain the graph $G$ from $\widehat H$ by appending to each vertex $v\in V(H)$ a path $P^v$ of length $(k-2)/2$ such that one endpoint of $P^v$ is $v$ and all the other vertices of $P^v$ are new.
In particular, for any two different vertices $v,w\in V(H)$, the corresponding paths $P^v$ and $P^w$ are vertex-disjoint.
By construction, the graph $G$ is chordal.
For each vertex $v\in V(H)$, let us denote the two endpoints of $P^v$ by $v$ and $v'$ (with $v' = v$ if and only if $k = 2$), and let $X = \{v':v\in V(H)\}$.
For arbitrary two distinct vertices $u,v\in V(H)$, it holds that $u'v'\in E(G^k)$ if and only if $\dist_G(u',v')\le k$, which happens if and only if $u$ and $v$ are adjacent in $H$.
Thus, the subgraph of $G^k$ induced by $X$ is isomorphic to $H$.
This shows that for every graph $H$ there exists a chordal graph $G$ such that $G^k$ contains an induced subgraph isomorphic to $H$, as claimed.
\end{proof}

\begin{remark}
Since the balanced complete bipartite graph $K_{n,n}$ has clique number two and treewidth~$n$, the above proof shows, in fact, that for any even positive integer $k$, the class of $k$-th powers of chordal graphs is $(\tw,\omega)$-unbounded.
\end{remark}



\subsection{Algorithmic consequences}

Given a graph $G$ and two sets $A, B \subseteq V(G)$, we denote by $\dist_G(A,B)$ the minimum over all distances in $G$ between a vertex in $A$ and a vertex in $B$.
This definition is naturally extended to subgraphs of $G$, by considering the distance between their vertex sets.

Fix a positive integer $d$.
Given a graph $G$ and a finite family $\HH = \{H_j\}_{j\in J}$ of connected non-null subgraphs of $G$, a \emph{distance-$d$ $\HH$-packing} in $G$ is a subfamily $\HH' = \{H_i\}_{i\in I}$ of subgraphs from $\HH$ (that is, $I\subseteq J$) that are at pairwise distance at least $d$.
If the subgraphs in $\HH$ are equipped with a weight function $\wei:J\to \mathbb{Q}_+$ assigning weight $\wei_j$ to each subgraph $H_j$, we define the \emph{weight} of a distance-$d$ independent $\HH$-packing $\HH'=\{H_i\}_{i\in I}$ in $G$ as the sum $\sum_{i\in I}\wei_i$.
Given a graph $G$, a finite family $\HH = \{H_j\}_{j\in J}$ of connected non-null subgraphs of $G$, and a weight function $w:J\to \mathbb{Q}_+$ on the subgraphs in $\HH$, the \textsc{Maximum Weight Distance-$d$ Packing} problem asks to find a distance-$d$ $\HH$-packing in $G$ of maximum weight.
When $d = 2$ and $\HH$ is the set of all subgraphs of $G$ isomorphic to a member of a fixed family $\mathcal{F}$ of connected graphs, the problem coincides with the \textsc{Maximum Weight Independent $\mathcal{F}$-Packing} problem studied in Dallard et al.~\cite{dallard2022firstpaper}.

\begin{sloppypar}
\begin{observation}\label{observation}
Let $d$ be a positive integer, let $G$ be a graph, let $\HH = \{H_j\}_{j\in J}$ be a finite family of connected non-null subgraphs of $G$, and let $I\subseteq J$.
Then, the corresponding subfamily $\HH' = \{H_i\}_{i\in I}$ is a distance-$d$ $\HH$-packing in $G$ if and only if $\HH'$ is an independent $\HH$-packing in the graph $G^{d-1}$.
\end{observation}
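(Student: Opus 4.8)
The plan is to reduce the claim to a single pairwise statement and then verify that statement directly from the definition of graph powers. First I would check that the right-hand side is even meaningful: since $G$ and $G^{d-1}$ have the same vertex set and $E(G)\subseteq E(G^{d-1})$, every $H_j$ is a subgraph of $G^{d-1}$ that is still connected (adding edges cannot destroy connectivity) and non-null, so $\HH$ is a legitimate family of connected non-null subgraphs of $G^{d-1}$ and "independent $\HH$-packing in $G^{d-1}$" makes sense. For $d=1$ one reads $G^0$ as the edgeless graph on $V(G)$, in line with the usage of $G^{d-1}$ in \cref{sec:distance}.

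Next I would observe that both notions are conjunctions, over the unordered pairs of distinct indices $i,i'\in I$, of a pairwise condition: $\HH'$ is a distance-$d$ $\HH$-packing in $G$ precisely when $\dist_G(H_i,H_{i'})\ge d$ for all such pairs, while $\HH'$ is an independent $\HH$-packing in $G^{d-1}$ precisely when for all such pairs $H_i$ and $H_{i'}$ are vertex-disjoint and no edge of $G^{d-1}$ joins them. Hence it suffices to prove, for two fixed distinct members $H,H'$ of $\HH$, the equivalence
\[
\dist_G(H,H')\le d-1 \quad\Longleftrightarrow\quad \text{$H$ and $H'$ share a vertex, or some edge of $G^{d-1}$ joins them,}
\]
after which one takes contrapositives and conjoins over all pairs to conclude.

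For the forward direction I would pick $x\in V(H)$ and $y\in V(H')$ realizing $\dist_G(x,y)=\dist_G(H,H')\le d-1$: if $x=y$ the two subgraphs share a vertex, and if $x\ne y$ then $1\le\dist_G(x,y)\le d-1$, so $xy\in E(G^{d-1})$ by definition of the $(d-1)$-th power, an edge between $H$ and $H'$. For the backward direction, a common vertex witnesses $\dist_G(H,H')=0\le d-1$, and an edge $xy\in E(G^{d-1})$ with $x\in V(H)$ and $y\in V(H')$ gives $\dist_G(x,y)\le d-1$, hence $\dist_G(H,H')\le d-1$. The whole argument is entirely routine; the only points needing a word of care are the well-definedness check just mentioned and the degenerate case $d=1$, where $G^0$ is edgeless and both sides of the equivalence reduce to "pairwise vertex-disjoint".
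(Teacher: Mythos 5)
Your proposal is correct and follows essentially the same route as the paper's proof: both reduce the statement to the pairwise equivalence between $\dist_G(V(H_i),V(H_j))\le d-1$ and the two subgraphs sharing a vertex or being joined by an edge of $G^{d-1}$, and verify it directly from the definition of the $(d-1)$-th power. Your added remarks on well-definedness and the degenerate case $d=1$ are sensible but not needed beyond what the paper's argument already covers.
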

\end{sloppypar}

\begin{proof}
Consider two distinct elements $i,j\in I$.
It suffices to show that $\dist_G(V(H_i),V(H_{j})) \ge  d$ if and only if $\dist_{G^{d-1}}(V(H_i),V(H_{j})) \ge 2$, or, equivalently, that $\dist_G(V(H_i),V(H_{j})) \le d-1$ if and only if $\dist_{G^{d-1}}(V(H_i),V(H_{j})) \le 1$.
Assume first that $\dist_G(V(H_i),V(H_{j})) \le d-1$ and let $P = (v_0,\ldots, v_r)$ be a path in $G$ from a vertex in $H_i$ to a vertex in $H_{j}$ such that $r\le d-1$.
If $r = 0$ then $\dist_{G^{d-1}}(V(H_i),V(H_{j}))= 0$.
If $r>0$, then in $G^{d-1}$, vertices $v_0$ and $v_{r}$ are adjacent, and hence $\dist_{G^{d-1}}(V(H_i),V(H_{j}))\le 1$.
Assume now that $\dist_{G^{d-1}}(V(H_i),V(H_{j})) \le 1$.
If this distance is~$0$, then the graphs $H_i$ and $H_{j}$ have a vertex in common and hence $\dist_G(V(H_i),V(H_{j})) = 0\le d-1$.
If this distance is $1$, then there is a path $P$ in $G$ of length at most $d-1$ from a vertex in $H_i$ to a vertex in $H_{j}$.
Thus, $\dist_G(V(H_i),V(H_{j}))\le d-1$.
\end{proof}

We now generalize \cref{max-weight-independent-subgraph-packing-for-bounded-tree-alpha}.

\begin{sloppypar}
\begin{theorem}\label{max-weight-distance-d-subgraph-packing-for-bounded-tree-alpha}
For every even positive integer $d$ and every $k\ge 1$, given a graph $G$, a finite family $\HH = \{H_j\}_{j\in J}$ of connected non-null subgraphs of $G$, and a weight function $\wei:J\to \mathbb{Q}_+$ on the subgraphs in $\HH$, the \textsc{Maximum Weight Distance-$d$ Packing} problem is solvable in time \hbox{$\mathcal{O}\left((|V(G)|+|V(T)|)\cdot|E(G)| + |J| \cdot |V(G)|\cdot ((|J| + |V(T)|+|V(G)|) + |J|^{k+1}\cdot|V(T)|)\right)$} if $G$ is given with a tree decomposition \hbox{$\mathcal{T} = (T, \{X_t\}_{t\in V(T)})$} with independence number at most~$k$.
\end{theorem}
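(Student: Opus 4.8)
\emph{Plan.} The idea is to reduce the \textsc{Maximum Weight Distance-$d$ Packing} problem on $G$ to the \textsc{Max Weight Independent Packing} problem on the power $G^{d-1}$ and then invoke \cref{max-weight-independent-subgraph-packing-for-bounded-tree-alpha}. The correctness of the reduction is exactly \cref{observation}: a subfamily $\HH' = \{H_i\}_{i\in I}$ is a distance-$d$ $\HH$-packing in $G$ if and only if it is an independent $\HH$-packing in $G^{d-1}$, and in both formulations the weight of $\HH'$ is the same quantity $\sum_{i\in I}\wei_i$, so the two optimization problems have the same optimal solutions and the same optimal value. The only thing to check for this to make sense is that each $H_j$ is a connected non-null subgraph of $G^{d-1}$; this holds because $H_j$, with its vertex set and edge set unchanged, is a subgraph of $G^{d-1}$ (since $E(H_j)\subseteq E(G)\subseteq E(G^{d-1})$), and adding edges to the host graph destroys neither connectivity nor non-nullity of $H_j$.

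\emph{Producing the input to the subroutine.} Here the structural results of the section enter. Since $d$ is even, $d-1$ is odd, so \cref{tin-can-only-go-down-when-taking-odd-powers-algorithmic}, applied with the fixed odd integer $d-1$ in place of its parameter $k$, takes $G$ and $\mathcal{T}$ and computes in time $\mathcal{O}\big((|V(G)|+|V(T)|)\cdot(|V(G)|+|E(G)|)\big)$ both the graph $G^{d-1}$ and a tree decomposition $\mathcal{T}' = (T,\{X'_t\}_{t\in V(T)})$ of $G^{d-1}$ with $\alpha(\mathcal{T}')\le\alpha(\mathcal{T})\le k$. Note that $\mathcal{T}'$ uses the same decomposition tree $T$, so $|V(T')|=|V(T)|$, and that $|V(G^{d-1})| = |V(G)|$.

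\emph{Finishing.} I would then run the algorithm of \cref{max-weight-independent-subgraph-packing-for-bounded-tree-alpha} on the input $(G^{d-1},\HH,\wei,\mathcal{T}')$, obtaining a maximum-weight independent $\HH$-packing in $G^{d-1}$, which by the first paragraph is a maximum-weight distance-$d$ $\HH$-packing in $G$, and output it. For the running time, substitute $|V(G^{d-1})|=|V(G)|$, $|V(T')|=|V(T)|$, and the crude estimate $|E(G^{d-1})|\le |V(G)|^2$ into the bound of \cref{max-weight-independent-subgraph-packing-for-bounded-tree-alpha}, add the (lower-order) cost of the preprocessing step, and verify that the total is dominated by the stated expression.

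\emph{Main obstacle.} There is no deep obstacle: the mathematical content sits entirely in \cref{observation} and in \cref{tin-can-only-go-down-when-taking-odd-powers-algorithmic} (which in turn rests on \cref{tin-from-k-to-k+2}), and the proof of the present theorem is just their composition. The only points requiring a little care are the (routine) verification that the subgraphs $H_j$ remain admissible when the host graph is replaced by $G^{d-1}$, and the bookkeeping in the running-time accounting, in particular absorbing the preprocessing cost and the quadratic term $|E(G^{d-1})|$ into the deliberately loose bound in the statement.
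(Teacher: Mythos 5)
Your proposal is correct and follows essentially the same route as the paper's own proof: apply \cref{observation} to reduce to \textsc{Max Weight Independent Packing} on $G^{d-1}$, use \cref{tin-can-only-go-down-when-taking-odd-powers-algorithmic} (with odd exponent $d-1$) to produce $G^{d-1}$ together with a tree decomposition of independence number at most $k$, and invoke \cref{max-weight-independent-subgraph-packing-for-bounded-tree-alpha}, bounding $|E(G^{d-1})|\le |V(G)|^2$ in the running-time accounting. No substantive differences.
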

\end{sloppypar}

\begin{sloppypar}
\begin{proof}
Let the input to the \textsc{Maximum Weight Distance-$d$ Packing} be given by the graph $G$, a finite family $\HH = \{H_j\}_{j\in J}$ of connected non-null subgraphs of $G$, and a weight function $\wei:J\to \mathbb{Q}_+$ on the subgraphs in $\HH$. 
The problem asks for a maximum-weight distance-$d$ $\HH$-packing in $G$, that is, a maximum-weight subfamily $\HH' = \{H_i\}_{i\in I}$ of subgraphs from $\HH$ that are at pairwise distance at least $d$ in $G$.
By \cref{observation}, for every $I\subseteq J$, the corresponding subfamily $\HH' = \{H_i\}_{i\in I}$ is a distance-$d$ $\HH$-packing in $G$ if and only if $\HH'$ is an independent $\HH$-packing in the graph $G^{d-1}$.

By \cref{tin-can-only-go-down-when-taking-odd-powers-algorithmic}, the graph $G^{d-1}$ and a tree decomposition \hbox{$\mathcal{T}' = (T, \{\Bag'_t\}_{t\in V(T)})$} of $G^{d-1}$ with independence number at most $k$ can be computed from $G$ and $\mathcal{T}$ in time $\mathcal{O}\left((|V(G)|+|V(T)|)\cdot(|V(G)|+|E(G)|)\right)$.
By \cref{max-weight-independent-subgraph-packing-for-bounded-tree-alpha}, a maximum-weight independent $\HH$-packing in the graph $G^{d-1}$ can be computed in time \hbox{$\mathcal{O}(|J| \cdot ((|J| + |V(T)|) \cdot |V(G^{d-1})| + |E(G^{d-1})|+|J|^{k}\cdot|V(T)|))$}.
Using the fact that $|V(G^{d-1})| = |V(G)|$ and $|E(G^{d-1})| \le |V(G)|^2$, the time complexity simplifies to 
\hbox{$\mathcal{O}(|J| \cdot ((|J| + |V(T)|) \cdot |V(G)| + |V(G)|^2+|J|^{k}\cdot|V(T)|))$}.
Altogether, we obtain the claimed time complexity $\mathcal{O}\left((|V(G)|+|V(T)|)\cdot|E(G)| + |J| \cdot |V(G)|\cdot ((|J| + |V(T)|+|V(G)|) + |J|^{k+1}\cdot|V(T)|)\right)$.  
\end{proof}
\end{sloppypar}

\medskip
Note that the proof of \cref{max-weight-distance-d-subgraph-packing-for-bounded-tree-alpha} does not require the distance bound $d$ to be fixed.
The result remains true, and with the stated time complexity, even if $d$ is part of input (as long as it is even).
Furthermore, if the input graph is not equipped with a tree decomposition of bounded independence number, we can obtain one in polynomial time using a recent result of Dallard et al.~\cite{dallard2022computing}.
Thus, given a graph $G$, a finite family $\HH = \{H_j\}_{j\in J}$ of connected non-null subgraphs of $G$, and a weight function $\wei:J\to \mathbb{Q}_+$ on the subgraphs in $\HH$, the \textsc{Maximum Weight Distance-$d$ Packing} problem is polynomial-time solvable in any class of graphs in which the tree-independence number is bounded by a  constant.

On the other hand, the assumption that $d$ is even is necessary in \cref{max-weight-distance-d-subgraph-packing-for-bounded-tree-alpha}, unless $\P= \NP$. 
Indeed, Eto et al.~\cite{MR3149106} showed that for all odd $d\ge 3$, the \textsc{Distance-$d$ Independent Set} problem is \NP-complete in the class of chordal graphs, which is exactly the class of graphs with tree-independence number at most~$1$.
Note that this \NP-completeness result is in line with the fact, shown in the proof of \cref{prop:no-bound}, that even powers of chordal graphs may contain arbitrary graphs as induced subgraphs.

\section{Finding large induced sparse subgraphs and \cmsotwo}\label{sec:cmso}
We start this section by recalling some basic notions and properties related to \cmsotwo.

\subsection{\cmsotwo: basic notions and properties}

\paragraph{\cmsotwo logic on graphs}
We assume that graphs are encoded as relational structures: each vertex and each edge is represented by a single variable (distinguishable by a unary predicate), and there is a single binary relation $\mathsf{inc}$ binding each edge with its endvertices.

\msotwo (\emph{Monadic Second Order}) logic is a logic on graphs that allows us to use vertex variables, edge variables, vertex sets variables, and edge set variables,
and, furthermore, to quantify over them. An example of an \msotwo formula is the following expression checking if the chromatic number of a given graph is at most $r$ (for constant $r$).
Here, $S_i$'s are vertex set variables, $e$ is an edge variable, and $x,y$ are vertex variables.
\begin{align}
\begin{split}
\phi_{\chi \leq r} \; :=  \;& \exists_{S_1,S_2,\ldots,S_r} \; (\forall_{x} \; x \in S_1 \cup S_2 \ldots \cup S_r) \; \land \\
 & \forall_{e} \forall_{x,y} \; (\mathsf{inc}((e,x) \land \mathsf{inc}(e,y) \land x \neq y) \to  \bigwedge_{i = 1}^r (x \notin S_i \lor y \notin S_i).	
 \end{split} \label{eq:chi-k}
\end{align}

For a positive integer $p$, by \cpmsotwo we mean the extension of \msotwo that allows us to use atomic formulae of the form $|S| \equiv a \bmod b$, where $S$ is a (vertex of edge) set variable, and $a,b \leq p$ are integers. We define \cmsotwo (\emph{Counting Monadic Second Order} logic) to be  $\bigcup_{p >0}$\cpmsotwo.
The \emph{quantifier rank} of a formula is the maximum number of nested quantifiers in the formula.

\paragraph{Boundaried graphs and  \cmsotwo types}
Let $\ell >0$ be an integer.
An \emph{$\ell$-boundaried graph} is a pair $(G,\iota)$, where $G$ is a graph and $\iota$ is a partial injective function from $V(G)$ to $[\ell]$. The domain of $\iota$, denoted by $\dom(\iota)$ is the \emph{boundary} of $(G,\iota)$. For $v \in \dom(\iota)$, the value of $\iota(v)$ is called the \emph{label} of $v$.

Let us define two natural operations on boundaried graphs.
For an $\ell$-boundaried graph $(G,\iota)$ and $l \in [\ell]$,
the result of \emph{forgetting the label $l$} is the $[\ell]$-boundaried graph $(G,\iota_{\neg l})$,
here $\iota_{\neg l} := \iota|_{\dom(\iota) \setminus \iota^{-1}(l)}$.
In other words, if the boundary of $(G,\iota)$ contains vertex $v$ with label $l$,
we remove the label from $v$, while keeping $v$ in the graph. Note that if $l \notin \dom(\iota)$, then
this operation does not do anything.

For two $\ell$-boundaried graphs $(G_1,\iota_1)$ and $(G_2,\iota_2)$,
the result of \emph{gluing} them is the $\ell$-boundaried graph $(G_1,\iota_1) \oplus_{[\ell]} (G_2,\iota_2)$ obtained from the disjoint union of $(G_1,\iota_1)$ and $(G_2,\iota_2)$ by identifying vertices with the same label.
Note that it might happen that some label  $l$ is used in, e.g., $\iota_1$ but not in $\iota_2$. Then the vertex with label $l$ in $(G_1,\iota_1) \oplus_{[\ell]} (G_2,\iota_2)$ is the copy of the vertex with label $l$ from $(G_1,\iota_1)$.

It is straightforward to observe that a graph has treewidth less than $\ell$
if and only if it can be constructed from $\ell$-boundaried graphs with at most two vertices by a sequence of forgetting and gluing operations (see, e.g.,~\cite{DBLP:journals/siamcomp/FominTV15}).

By \cmsotwo in $[\ell]$-boundaried graphs we mean \cmsotwo extended with $\ell$ unary predicates: for each $l \in [\ell]$ we have a predictate that selects a vertex with label $l$ (if such a vertex exists).
The following proposition is folklore (see, e.g.,~\cite[Lemma~6.1]{GroheK09}). The exact formulation we use here comes from Gartland et al.~\cite[Proposition~8]{DBLP:conf/stoc/GartlandLPPR21}.

\begin{proposition}\label{prop:mso-type}
For every triple of integers $\ell,p,q$, there exists a finite set $\msotypes^{\ell,p,q}$ and a function that assigns to every $\ell$-boundaried graph $(G,\iota)$ a \emph{type} $\msotype^{\ell,p,q}(G,\iota) \in \msotypes^{\ell,p,q}$ such that the following holds:
\begin{enumerate}
\item The types of isomorphic graphs are the same.
\item 
  For every \cpmsotwo sentence $\phi$ on $\ell$-boundaried graphs, whether $(G,\iota)$ satisfies $\phi$ depends only on the type $\msotype^{\ell,p,q}(G,\iota)$, where $q$ is the quantifier rank of $\phi$.
  More precisely,
  there exists a subset $\msotypes^{\ell,p,q}[\phi] \subseteq \msotypes^{\ell,p,q}$
  such that for every $\ell$-boundaried graph $(G,\iota)$ we have
  \[(G,\iota) \models \phi\qquad\textrm{if and only if}\qquad \msotype^{\ell,p,q}(G,\iota) \in \msotypes^{\ell,p,q}[\phi].\]
\item 
 The types of ingredients determine the type of the result of the gluing operation.
More precisely, for every two types $\tau_1,\tau_2 \in \msotypes^{\ell,p,q}$
there exists a type $\tau_1 \oplus_{\ell,p,q} \tau_2$ such that
for every two $\ell$-boundaried graphs $(G_1,\iota_1)$, $(G_2,\iota_2)$,
if $\msotype^{\ell,p,q}(G_i,\iota_i) = \tau_i$ for $i=1,2$, then
\[\msotype^{\ell,p,q}((G_1,\iota_1) \oplus_{[\ell]} (G_2,\iota_2)) = \tau_1 \oplus_{\ell,p,q} \tau_2.\]
Also, the operation $\oplus_{\ell,p,q}$ is associative and commutative.
\item 
 The type of the ingredient determines the type of the result of the forget label operation.
 More precisely, 
 for every type $\tau \in \msotypes^{\ell,p,q}$ and $l \in [\ell]$
there exists a type $\tau_{\neg l}$ such that
for every $\ell$-boundaried graph $(G,\iota)$,
if $\msotype^{\ell,p,q}(G,\iota) = \tau$
and $(G,\iota_{\neg l})$ is the result of forgetting $l$ in $(G,\iota)$, then
\[\msotype^{\ell,p,q}(G,\iota_{\neg l}) = \tau_{\neg l}.\]
\end{enumerate}
\end{proposition}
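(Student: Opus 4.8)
The plan is to follow the standard Feferman--Vaught / Ehrenfeucht--Fra\"iss\'e composition-theorem approach, specialized to \cmsotwo over a finite relational vocabulary. Fix $\ell,p,q$ and work in the vocabulary consisting of the two sorts of elements (vertices and edges), the incidence relation $\mathsf{inc}$, the $\ell$ unary label predicates, and the modular-counting atoms $|S|\equiv a\bmod b$ with $a,b\le p$. Call two $\ell$-boundaried graphs \emph{$q$-equivalent} if they satisfy exactly the same \cpmsotwo sentences of quantifier rank at most $q$, and let $\msotype^{\ell,p,q}(G,\iota)$ be the $q$-equivalence class of $(G,\iota)$. Property~(1) is then immediate, since isomorphic structures satisfy the same sentences, and Property~(2) is immediate as well: take $\msotypes^{\ell,p,q}[\phi]$ to be the set of classes all of whose members satisfy $\phi$ (a rank-$q$ sentence, hence a rank-$\le q$ sentence, has constant truth value on a $q$-equivalence class). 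It remains to establish (i) finiteness of the set of classes and (ii) the composition laws (3)--(4).

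For finiteness I would argue by induction on $q$ that, up to logical equivalence, there are only finitely many \cpmsotwo formulas of quantifier rank at most $q$ whose free variables lie in a fixed finite set: the atomic formulas over a fixed finite variable set are finite in number (in particular there are only finitely many modular-counting atoms, as $a,b\le p$), and a formula of rank $q{+}1$ is, up to equivalence, a Boolean combination of formulas of the form $\exists x\,\psi$ or $\exists X\,\psi$ with $\psi$ of rank at most $q$; by the inductive hypothesis there are finitely many such $\psi$ up to equivalence, hence finitely many quantified formulas, hence finitely many Boolean combinations up to equivalence. In particular there are finitely many rank-$\le q$ sentences up to equivalence, so finitely many $q$-equivalence classes; let $\msotypes^{\ell,p,q}$ be this finite set.

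For the composition laws, (4) is easy: forgetting a label is passing to a reduct of the relational structure (the graph is unchanged, one unary predicate is dropped from the interpretation), and since every sentence of the smaller vocabulary is a sentence of the larger one, the $q$-type over the larger vocabulary determines which rank-$\le q$ sentences of the smaller vocabulary hold, hence determines $\tau_{\neg l}$. For (3) I would use the $q$-round \emph{counting} Ehrenfeucht--Fra\"iss\'e game, in which each set move additionally commits to the residue of the chosen set modulo every $b\le p$ and the responder must match these residues; a standard argument shows Duplicator wins this game on $(G,\iota)$ and $(G',\iota')$ exactly when they are $q$-equivalent. Given $(G_i,\iota_i)\equiv_q(G_i',\iota_i')$ for $i=1,2$, Duplicator wins on the two amalgams by running her two sub-strategies in parallel: a vertex move lies in the interior of one side or on the shared boundary $B$ (boundary moves being forced, as boundary vertices are pinned down by label predicates), and a set move $X$ splits as $X=(X\cap I_1)\uplus(X\cap I_2)\uplus(X\cap B)$, where $I_1,I_2$ are the interiors and $|B|$ (and the number of edges inside $B$) is bounded in terms of $\ell$; Duplicator answers the two interior parts by the sub-strategies and the boundary part identically, so that $|X|=|X\cap I_1|+|X\cap I_2|+|X\cap B|$ forces the residues to add up and match. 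Hence the amalgams are $q$-equivalent, their common type depends only on $\tau_1,\tau_2$, and this defines $\oplus_{\ell,p,q}$; associativity and commutativity follow from the corresponding properties of $\oplus_{[\ell]}$ together with Property~(1).

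The main obstacle is the modular-counting bookkeeping in the game composition for (3): because gluing identifies boundaries rather than taking a disjoint union, each set move must be analysed as an amalgam-style split, and one must check that the residue commitments combine additively across the two interiors and the fixed bounded-size overlap. This is precisely where \cmsotwo asks for slightly more than \msotwo, but it is routine once the counting game is set up; alternatively one may bypass the game by invoking a Feferman--Vaught-type decomposition for \cmsotwo over amalgams with bounded overlap, which is available in the literature.
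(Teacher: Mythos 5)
The paper does not prove this proposition at all: it is stated as folklore and delegated to the cited references (Grohe--Kreutzer and Gartland et al.), so there is no in-paper argument to compare against. Your proposal is essentially the standard proof underlying those citations --- types as rank-$q$ equivalence classes, finiteness via the finitely-many-inequivalent-formulas induction, and composition via a Feferman--Vaught/Ehrenfeucht--Fra\"iss\'e argument with modular-counting moves --- and the overall structure is sound, including the correct observation that gluing only identifies vertices, so edge sets split cleanly between the two parts.

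One small imprecision worth fixing is in your treatment of property (4). Forgetting the label $l$ does not produce a reduct: the resulting object $(G,\iota_{\neg l})$ is still an $\ell$-boundaried graph over the \emph{full} vocabulary, with the predicate for label $l$ now interpreted as the empty set. Your argument as written only shows that the type of $(G,\iota)$ determines the truth of sentences that do not mention label $l$; to determine the full type of $(G,\iota_{\neg l})$ you should additionally note that any sentence over the full vocabulary, evaluated on a structure where predicate $l$ is empty, is equivalent to the sentence of the same quantifier rank obtained by syntactically replacing every atom involving label $l$ with \emph{false}, whose truth is then governed by the type of $(G,\iota)$. This is a one-line repair, not a conceptual gap.
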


Intuitively, for any $\ell$-boundaried graph $(G,\iota)$,
whether $(G,\iota) \models \phi$ is fully determined by $\msotype^{\ell,p,q}(G,\iota)$ (where $p$ and $q$ are as in \cref{prop:mso-type}).
Furthermore, if $\phi$ and $\ell$ are fixed, then the sets $\msotypes^{\ell,p,q}$ and $\msotypes^{\ell,p,q}[\phi]$ are of constant size.
Finally, it is known that there is an algorithm that, given $\ell$ and $\phi$, computes $\msotypes^{\ell,p,q}$ and $\msotypes^{\ell,p,q}[\phi]$~\cite{DBLP:journals/algorithmica/BoriePT92}.

\subsection{\cmsotwo-definable subgraphs of graphs with small tree-independence number}

Now we are ready to prove our main result concerning sparse induced subgraphs satisfying some \cmsotwo-definable property.
\begin{theorem}\label{thm:cmsomain}
For every $k,r$ and a \cmsotwo formula $\psi$ there exists a positive integer $f(k,r,\psi)$ such that the following holds.
Let $G$ be a graph given along with a tree decomposition of independence number at most $k$,
and let $\wei : V(G) \to \Q_+$ be a weight function.
Then in time $f(k,r,\psi) \cdot |V(G)|^{\Oh(kr)}$ we can find a set $Y \subseteq V(G)$, such that
\begin{enumerate}
\item $G[Y] \models \psi$,
\item $\chi(G[Y]) \leq r$,
\item $F$ is of maximum weight subject to the conditions above,
\end{enumerate}
or conclude that no such set exists.
\end{theorem}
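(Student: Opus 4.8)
The plan is to fold the colouring constraint into the logic and then run a dynamic programming over the given tree decomposition in which a state records, for each bag, the \emph{trace} of the sought set together with its \cmsotwo type. First I would set $\psi' := \psi \wedge \phi_{\chi\le r}$, where $\phi_{\chi\le r}$ is the \msotwo sentence from~\eqref{eq:chi-k}; then $\psi'$ is again a \cmsotwo sentence and $G[Y]\models\psi'$ holds exactly when $G[Y]\models\psi$ and $\chi(G[Y])\le r$. The single point where bounded tree-independence number enters is the following: if $\mathcal{T}=(T,\{X_t\}_{t\in V(T)})$ is the given decomposition with $\alpha(\mathcal{T})\le k$ and $G[Y]\models\psi'$, then for every node $t$ the graph $G[Y\cap X_t]$ is an induced subgraph of both $G[X_t]$ and $G[Y]$, so it has independence number at most $k$ and chromatic number at most $r$, hence at most $kr$ vertices; consequently every feasible solution has all its bag-traces of size at most $\ell := kr$ (even though the bags themselves may be arbitrarily large), and moreover $\tw(G[Y])\le kr-1$. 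Next I would fix $p$ so that $\psi$ is a \cpmsotwo sentence and let $q$ be the quantifier rank of $\psi'$, and apply \cref{prop:mso-type} to precompute the finite set $\msotypes^{\ell,p,q}$, the accepting subset $\msotypes^{\ell,p,q}[\psi']$, and tables for the operations $\oplus_{\ell,p,q}$ and $\tau\mapsto\tau_{\neg l}$; I would also use that relabelling the boundary along a bijection of $[\ell]$ is again a computable operation on types (it merely permutes the label predicates). All of these objects have size and computation time bounded by a function $f(k,r,\psi)$.

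Then I would set up the dynamic program. Convert $\mathcal{T}$ into a nice tree decomposition whose root bag is empty --- this can be done in polynomial time without increasing $\alpha$ --- and fix a linear order on $V(G)$, so that every vertex set $S$ with $|S|\le\ell$ acquires a canonical labelling $\iota_S$ (its $i$-th smallest vertex gets label $i$). For a node $t$, write $G_t$ for the subgraph of $G$ induced by the union of the bags in the subtree rooted at $t$. I would compute bottom-up a table assigning to every pair $(S,\tau)$ with $S\subseteq X_t$, $|S|\le\ell$ and $\tau\in\msotypes^{\ell,p,q}$ the maximum of $\sum_{v\in Y_t}\wei(v)$ over sets $Y_t\subseteq V(G_t)$ with $Y_t\cap X_t=S$ and $\msotype^{\ell,p,q}(G[Y_t],\iota_S)=\tau$ (and $-\infty$ if there is none). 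Correctness rests on the usual separator argument: $X_t$ separates $V(G_t)\setminus X_t$ from the rest of $G$, so for any $Y$ one has $G[Y]=\bigl(G[Y\cap V(G_t)],\iota\bigr)\oplus_{[\ell]}\bigl(G[Y\setminus(V(G_t)\setminus X_t)],\iota'\bigr)$, the gluing being along $Y\cap X_t$; likewise, at a join node $G[Y\cap V(G_t)]$ is the gluing of the two children's graphs along $Y\cap X_t$, at an introduce node ($X_t=X_{t'}\cup\{v\}$) adding $v$ to the partial solution is the gluing of $G[Y_{t'}]$ with the small gadget on $\{v\}\cup(N(v)\cap S')$ in which $v$ sees exactly $N(v)\cap S'$, and at a forget node the underlying graph is unchanged while one label leaves the boundary. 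Combined with the compositionality of types in \cref{prop:mso-type} and the relabelling operation, this yields the standard leaf / introduce / forget / join recurrences; the only arithmetic care needed is to add $\wei(v)$ when a vertex is newly taken at an introduce node and to subtract $\sum_{v\in S}\wei(v)$ at a join node so that no vertex weight is counted twice.

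Finally, at the empty root bag the optimum weight equals $\max\{\,\mathsf{dp}_{\mathrm{root}}(\emptyset,\tau)\ :\ \tau\in\msotypes^{\ell,p,q}[\psi']\,\}$; if this is $-\infty$ I would report that no feasible set exists, and otherwise recover an optimal $Y$ by the usual backtracking through the tables. For the running time: the nice decomposition has $\Oh(|V(G)|^{2})$ nodes; a node stores at most $\Oh(|V(G)|^{\ell})$ traces (there are only that many vertex subsets of size at most $\ell$) times $|\msotypes^{\ell,p,q}|$ types, and the latter number depends only on $k,r,\psi$; and each transition --- in particular a join, which ranges over all traces $S$ and the constantly many pairs of children's types --- is carried out in time $\mathrm{poly}(|V(G)|)$ using the precomputed constant-size type tables. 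Altogether this is $f(k,r,\psi)\cdot|V(G)|^{\Oh(kr)}$, matching the statement.

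I expect the main difficulty here to be expository rather than conceptual. The one genuinely new ingredient is the size bound $|Y\cap X_t|\le kr$ on the trace of any feasible solution on any bag: this is what turns what would naively be an exponential enumeration of traces into an $|V(G)|^{\Oh(kr)}$ one, and it is exactly where the hypothesis on the independence number of the decomposition is spent. After that the argument is the by-now-standard dynamic programming over a tree decomposition with \cmsotwo types, in the spirit of~\cite{DBLP:journals/siamcomp/FominTV15,DBLP:conf/stoc/GartlandLPPR21}; the part requiring the most care is the bookkeeping of the canonical labellings of bag-traces across introduce and forget nodes, so that the gluing identities above hold literally, together with checking that the three operations on types we use --- gluing, forgetting a label, and relabelling the boundary --- are all available and computable, which is what \cref{prop:mso-type} (extended by the trivial relabelling operation) guarantees.
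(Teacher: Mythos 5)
Your proposal is correct and follows essentially the same route as the paper's proof: the key observation that $|Y\cap X_t|\le kr$ (independence number at most $k$ plus chromatic number at most $r$ forces each bag-trace, and hence the treewidth of the solution, to be bounded by $\ell=kr$), the conjunction of $\psi$ with $\phi_{\chi\le r}$, and a type-based dynamic program over a nice tree decomposition with $|V(G)|^{\Oh(kr)}$ states. The only cosmetic difference is that you fix a canonical boundary labelling and appeal to an extra relabelling operation on types, whereas the paper simply carries the injective labelling $\iota$ as part of the DP state, which avoids needing relabelling at all.
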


\begin{proof}
Let $\cT=(T,\{X_t\}_{t \in V(T)})$ be a tree decomposition of $G$ with independence number at most $k$.
Recall that without loss of generality we can assume that $\cT$ is nice.
Let $F$ be a fixed (unknown) optimal solution. Note that by removing all vertices from $V(G) \setminus F$ from each bag of $\cT$, we obtain a tree decomposition of $G[F]$ with independence number at most $k$. As $\chi(G[F]) \leq r$, we conclude that every bag of $\cT$ contains at most $kr$ vertices of $F$.
In particular, the treewidth of $G[F]$ is less than $\ell:=kr$.
This is the only way in which we use that $\chi(G[F]) \leq r$.
Thus, in order to make sure that the obtained solution satisfies the bound on the chromatic number,
instead of $\psi$ we will use the \cmsotwo formula $\phi := \psi \land \phi_{\chi \leq r}$, where $\phi_{\chi \leq r}$ is the formula \eqref{eq:chi-k}.

We will think of $G[F]$ as constructed by a sequence of forgetting and gluing operations on $\ell$-boundaried graphs.

Consider the minimum integer $p$ such that $\phi$  belongs to \cpmsotwo, and let $q$ be the quantifier rank of $\phi$.
Let $\msotypes^{\ell,p,q}$ and $\msotypes^{\ell,p,q}[\phi]$ be as in \cref{prop:mso-type}.
Recall that these sets are of constant size (since $k,r$, and $\phi$ are fixed); we assume that they are hard-coded in our algorithm.
Similarly, we assume that the forgetting and gluing operations on types, i.e., functions 
\[(\tau_1,\tau_2) \mapsto \tau_1 \oplus_{\ell,p,q} \tau_2\qquad\textrm{and}\qquad (\tau,l) \mapsto \tau_{\neg l},\]
are hard-coded. Finally, our algorithm has the list of types of all $\ell$-boundaried graphs with at most $\ell$ vertices.
As $\ell,p,q$ are fixed throughout the algorithm, we will simplify the notation by dropping subscripts and superscripts, i.e., we will simply write $\msotypes, \msotypes[\phi]$, and $\oplus$.

Our algorithm is a standard bottom-up dynamic programming on a nice tree decomposition. We will describe how to compute the weight of an optimal solution; adapting our approach to return the solution itself is a standard task.

We introduce the table $\Tab[\cdot,\cdot]$ indexed by nodes $t$ of $T$ and \emph{states}.
Each state is a triple $(S, \iota, \tau)$, where $S \subseteq X_t$ has at most $\ell$ vertices,
$\iota$ is an injective function from $S$ to $[\ell]$, and $\tau \in \msotypes$. 
The value of $\Tab[t,(S,\iota,\tau)]$ will encode the maximum weight of a subset $F_t$ of $V(G_t)$ such that
\begin{enumerate}
\item $F_t \cap X_t = S$,
\item $\msotype((G[F_t],\iota))=\tau$.
\end{enumerate}
Intuitively, we think of $S$ as the intersection of a fixed optimal solution with $X_t$ and we look for a maximum-weight induced $\ell$-boundaried subgraph of $G_t$ with boundary $S$ and type $\tau$.
If there is no subset $F_t$ satisfying the conditions above, we will have $\Tab[t,(S,\iota,\tau)]=-\infty$. 
Clearly, having computed all entries of $\Tab[\cdot,\cdot]$, the weight of the sought-for optimum solution is
\[
\max \Tab[\mathrm{root}(T), (S,\iota,\tau)] \text{ over all feasible } S, \iota \text{ and } \tau \in \msotypes[\phi],
\]
where by $\mathrm{root}(T)$ we mean the root of $T$.

Let us now describe how we compute the entries $\Tab[t,\cdot]$ for a node $t$ of $T$.
The exact way depends on the type of $x$.

\paragraph*{Case: $t$ is a leaf node.}
Note that in this case $G_t$ has only one vertex, and thus the values of $\Tab[t,\cdot]$ can be extracted from the information that is hard-coded in the algorithm.

\paragraph*{Case: $t$ is an introduce node.}
Let $t'$ be the child of $t$ in $T$, and let $v$ be the unique vertex in $X_t \setminus X_{t'}$.
Clearly if $v \notin S$, then $\Tab[t,(S,\iota,\tau)] = \Tab[t,(S,\iota,\tau)]$.

So suppose $v \in S$, let $S' = S \setminus \{v\}$, and let $\iota' = \iota|_{S'}$.
Let $G_S$ be the subgraph of $X_t$ induced by $S$. As $|S| \leq \ell$, the value $\tau_S := \msotype((G_S,\iota))$ is hard-coded in our algorithm.
Now we set
\[
\Tab[t,(S,\iota,\tau)] = \max_{\tau'} \Tab[t',(S',\iota',\tau')],
\]
where the maximum is taken over all $\tau'$ such that $\tau_S \oplus \tau' = \tau$.

\paragraph*{Case: $t$ is a forget node.}
Let $t'$ be the child of $t$ in $\cT$, and let $v$ be the unique vertex in $X_{t'} \setminus X_t$.
Intuitively, partial solutions at $t$ correspond to partial solutions at $t'$ with the label of $v$ (if any) forgotten.

Thus 
\[
\Tab[t,(S,\iota,\tau)] = \max \left( \Tab[t',(S,\iota,\tau)] , \max_{\iota',\tau'} \Tab[t',(S \cup \{v\},\iota',\tau')] \right),
\]
where in the second term we assume that $|S| < \ell$.
The maximum in the second term is computed over all pairs $\iota',\tau'$, such that $\iota'|_{S}=\iota$ and $\tau'|_{\neg \iota'(v)}=\tau$.

\paragraph*{Case: $t$ is a join node.}
Let $t',t''$ be the children of $t$.
Thus 
\[
\Tab[t,(S,\iota,\tau)] = \max_{\tau',\tau''} \Tab[t',(S,\iota,\tau')] + \Tab[t'',(S,\iota,\tau'')] - \wei(S),
\]
where the maximum is taken over all $\tau',\tau''$ such that $\tau' \oplus \tau'' = \tau$. Note that we subtract the weight of $S$,
as it is counted both for $t'$ and for $t''$.

The correctness of the procedure above follows directly from the properties of tree decompositions and \cref{prop:mso-type}.

For each node $t$ of $T$, the number of states is at most $|X_t|^{\ell} \cdot \ell! \cdot |\msotypes|$.
Each entry of $\Tab[\cdot,\cdot]$ can be computed in time proportional to the number of states for each node (here we use that $k,r,\phi$ are fixed).
As the number of nodes of $T$ is polynomial in $|V(G)|$, we conclude that the running time is $f(k,r,\psi) \cdot |V(G)|^{\Oh(kr)}$ for some function $f$ depending on $k,r$, and $\psi$.
\end{proof}

\subsection{Generalizations}

In the first generalization we show that we can maximize the weight of some carefully chosen subset of the solution.
The formula $\psi$ uses this set as a free variable.

\begin{theorem}\label{thm:cmsomainX}
For every $k,r$ and a \cmsotwo formula $\psi$ with one free vertex-set variable, there exists a positive integer $f(k,r,\psi)$ such that the following holds.
Let $G$ be a graph given along with a tree decomposition of independence number at most $k$,
and let $\wei : V(G) \to \Q_+$ be a weight function.
Then in time $f(k,r,\psi) \cdot |V(G)|^{\Oh(kr)}$ we can find sets $X \subseteq F \subseteq V(G)$, such that
\begin{enumerate}
\item $(G[F],X) \models \psi$,
\item $\chi(G[F]) \leq r$,
\item $X$ is of maximum possible weight.
\end{enumerate}
\end{theorem}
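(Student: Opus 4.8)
The plan is to rerun the bottom-up dynamic programming from the proof of \cref{thm:cmsomain} over a type system that additionally keeps track of the free vertex-set variable of $\psi$. First I would observe that a \cmsotwo formula $\psi$ with one free vertex-set variable $X$ is nothing but a \cmsotwo \emph{sentence} over relational structures carrying one extra unary predicate, namely the one whose extension is $X$. Since adjoining a unary predicate is harmless for the Feferman--Vaught--style composition of \cmsotwo types, \cref{prop:mso-type} carries over verbatim to such \emph{marked} $\ell$-boundaried graphs $(G,\iota,M)$, where $M\subseteq V(G)$ is the marked set: there is a finite type set $\msotypes$, the type of $(G,\iota,M)$ decides whether $(G,\iota,M)\models\phi$ for $\phi:=\psi\land\phi_{\chi\le r}$ (a one-free-variable formula, with $\phi_{\chi\le r}$ as in \eqref{eq:chi-k}), and the types of a glue, respectively a forget, are computed from the types of the ingredients; the only new point---that the markings must agree on identified boundary vertices when gluing---is automatically enforced once the marking restricted to the boundary is part of the state. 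I would just invoke this extension rather than reprove it.

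Fix an unknown optimal pair $X\subseteq F\subseteq V(G)$. As in \cref{thm:cmsomain}, deleting $V(G)\setminus F$ from every bag gives a tree decomposition of $G[F]$ with independence number at most $k$, and $\chi(G[F])\le r$ forces every bag to meet $F$ in at most $\ell:=kr$ vertices, so $\tw(G[F])<\ell$ and the marked graph $(G[F],X)$ is obtainable from marked $\ell$-boundaried graphs on at most $\ell$ vertices by glue and forget operations. I would then enlarge the state at a node $t$ of the (nice) tree decomposition from $(S,\iota,\tau)$ to $(S,\iota,Z,\tau)$, where $S\subseteq X_t$ with $|S|\le\ell$, $\iota\colon S\to[\ell]$ is injective, $Z\subseteq S$ records the intersection of the partial marked set with $X_t$ (so the requirement $Z\subseteq S$ encodes $X\subseteq F$), and $\tau\in\msotypes$. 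The value $\Tab[t,(S,\iota,Z,\tau)]$ is the maximum of $\wei(M)$ over all pairs $M\subseteq F_t\subseteq V(G_t)$ with $F_t\cap X_t=S$, $M\cap X_t=Z$, and $\msotype(G[F_t],\iota,M)=\tau$, and $-\infty$ if no such pair exists. The number of states per node grows only by a factor of $2^{|S|}\le 2^\ell$, a constant, so the running time stays $f(k,r,\psi)\cdot|V(G)|^{\Oh(kr)}$.

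The transitions are those of the proof of \cref{thm:cmsomain} with two bookkeeping adjustments. At an introduce node with new vertex $v$ and child $t'$: if $v\notin S$ we copy the entry from $t'$; if $v\in S$ we put $S'=S\setminus\{v\}$, $Z'=Z\setminus\{v\}$, $\iota'=\iota|_{S'}$, take the hard-coded type $\tau_S:=\msotype(G[S],\iota,Z)$ of the marked graph on $S$, and set $\Tab[t,(S,\iota,Z,\tau)]=\wei(\{v\}\cap Z)+\max_{\tau_S\oplus\tau'=\tau}\Tab[t',(S',\iota',Z',\tau')]$, so the decision whether the introduced vertex joins $X$ is recorded both in $\tau_S$ and in the accumulated weight. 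At a join node with children $t',t''$ we set $\Tab[t,(S,\iota,Z,\tau)]=\max_{\tau'\oplus\tau''=\tau}\big(\Tab[t',(S,\iota,Z,\tau')]+\Tab[t'',(S,\iota,Z,\tau'')]-\wei(Z)\big)$, subtracting $\wei(Z)$ rather than $\wei(S)$ because it is now the weight of $X$, not of $F$, that is counted twice. At forget and leaf nodes the changes are analogous and routine: the forget transition additionally ranges over whether the forgotten vertex lies in the marked set and applies the forget operation on marked types, and the hard-coded leaf table lists the finitely many marked one-vertex types. The optimum is $\max\Tab[\mathrm{root}(T),(S,\iota,Z,\tau)]$ over all feasible $(S,\iota,Z)$ and $\tau\in\msotypes[\phi]$; if this equals $-\infty$, no valid pair $(F,X)$ exists.

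The main---and essentially the only non-routine---point is the first step: ensuring that the free set variable is treated as a genuine ingredient of the relational structure whose type is propagated through gluing and forgetting (so that, in particular, the markings of two partial solutions are forced to agree on their shared boundary), rather than being smuggled in as additional boundary labels. Once the type system of \cref{prop:mso-type} has been set up for structures with one extra unary predicate, correctness of the modified dynamic programming follows from exactly the same invariants as in \cref{thm:cmsomain}, and recovering the witnessing sets $X\subseteq F$ (not merely $\wei(X)$) is the same standard retracing of the table.
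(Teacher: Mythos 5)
Your argument is correct, but it takes a genuinely different route from the paper. The paper does not touch the dynamic programming at all: following Gartland et al.~\cite{DBLP:conf/stoc/GartlandLPPR21}, it reduces \cref{thm:cmsomainX} to \cref{thm:cmsomain} by a gadget construction, replacing $G$ by a ``forked'' graph in which every vertex receives three pendant leaves and every potential member of $X$ receives a pendant two-edge path carrying its weight; a modified sentence $\forked{\psi}$ recognizes exactly the forked solutions and decodes $F$ and $X$ from vertex degrees, so that \cref{prop:mso-type} is used exactly as stated, as a black box for sentences, and the weight of $X$ becomes the total weight of the forked solution. You instead rerun the dynamic programming with an enriched state $(S,\iota,Z,\tau)$ carrying the trace $Z$ of the marked set on the bag and accumulate weight only on the marked vertices. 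This is sound, but it rests on one ingredient the paper deliberately avoids: you need the analogue of \cref{prop:mso-type} for $\ell$-boundaried graphs equipped with an additional unary predicate, so that types of \emph{marked} boundaried graphs compose under gluing and forgetting. That extension is indeed standard Feferman--Vaught folklore and holds for the same reasons as the cited version (the signature merely gains a unary relation), so I do not count this as a gap; but since the proposition as printed speaks only of graphs, you should state the enlarged-signature version explicitly rather than citing it verbatim, and note, as you do, that agreement of the markings on identified boundary vertices is enforced by keeping $Z$ in the state. What each approach buys: yours is self-contained at the level of the algorithm, avoids the gadget and the re-encoding of $\psi$ into $\forked{\psi}$, and makes the weight bookkeeping transparent; the paper's is shorter given that \cref{thm:cmsomain} is already proved, leaves the type machinery untouched, and extends painlessly to the annotated version \cref{thm:cmsomainAnn} by varying the number of attached leaves.
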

\begin{proof}[Proof sketch] The proof follows by a simple trick used by Gartland et al.~\cite[Theorem 38]{DBLP:conf/stoc/GartlandLPPR21}.
We just sketch the argument here and refer the reader to~\cite[Theorem 38]{DBLP:conf/stoc/GartlandLPPR21} for more details.

For a graph $G$ and a subset $M \subseteq V(G)$, by $\forked{G}$ we denote the \emph{forked version of} $(G,M)$, which is obtained as follows.
We start the construction with a copy of $G$.
For every vertex $v$ of $G$ we introduce to $\forked{G}^M$ three vertices of degree 1, each adjacent only to $v$.
For every vertex $v$ of $M$ we introduce to $\forked{G}^M$ a two-edge path, whose one endvertex
is identified with $v$.
If $G$ is equipped with weights $\wei$, we set the weights $\forked{\wei}$ of all vertices of $\forked{G}^M$ to 0, with the exception that the value of $\wei(v)$ for $v \in M$ is transferred to the other endvertex of the two-edge path attached to $v$.

Note that if $F'$ is a forked version of some $(F,M)$, then $F$ and $M$ can be uniquely decoded from $F'$.
Indeed, $F$ is the subgraph of $F'$ induced by vertices of degree at least four, and vertices of $M$ are those that are adjacent (in $F'$) to a vertex of degree 2. Furthermore, $\chi(F') \leq \max(\chi(F'),1)$.

Moreover, observe that if $G$ is given with a tree decomposition $\cT=(T,\{X_t\}_{t \in V(T)})$ with independence number at most $k$,
then we can easily obtain a tree decomposition $\forked{\cT}=(\forked{T},\{\forked{X}_t\}_{t \in V(\forked{T})})$ of $\forked{G}^{V(G)}$ with independence number at most $k$.
The idea is to call the algorithm from \cref{thm:cmsomain} on $\forked{G}^{V(G)}$ and the formula $\forked{\psi}$ that for a graph $F'$ ensures that
\begin{itemize}
\item if $F'$ is a forked version of some $(F,M)$, then $F' \models \forked{\psi}$ if and only if $F \models \psi$,
\item otherwise $F' \not\models \forked{\psi}$.
\end{itemize}
Such a formula can be easily constructed in \cmsotwo~\cite{DBLP:conf/stoc/GartlandLPPR21}.
We observe that the solution found by the algorithm satisfies the conditions stated in the theorem; again, see~\cite[Theorem 38]{DBLP:conf/stoc/GartlandLPPR21} for details.
\end{proof}

The idea introduced in the proof of \cref{thm:cmsomainX} allows us to annotate vertices.
In other words, our input is a graph and a constant number of subsets of vertices. The formula $\psi$ uses these sets as free variables,
and we assume that the instance graph $G$ is given along with a  valuation of these free variables.

\begin{theorem}\label{thm:cmsomainAnn}
For every $k,r,p$ and a \cmsotwo formula $\psi$ with $p+1$ free vertex-set variables there exists a positive integer $f(k,r,p,\psi)$ such that the following holds.
Let $G$ be a graph given along with a tree decomposition of independence number at most $k$,
and let $\wei : V(G) \to \Q_+$ be a weight function.
Furthermore, let $A_1,A_2,\ldots,A_p \subseteq V(G)$ be specified subsets of vertices of $G$.
Then in time $f(k,r,p,\psi) \cdot |V(G)|^{\Oh(kr)}$ we can find sets $X \subseteq F \subseteq V(G)$, such that
\begin{enumerate}
\item $(G[F],X,A_1,A_2,\ldots,A_p) \models \psi$,
\item $\chi(G[F]) \leq r$,
\item $X$ is of maximum possible weight.
\end{enumerate}
\end{theorem}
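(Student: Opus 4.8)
The plan is to reduce \cref{thm:cmsomainAnn} to \cref{thm:cmsomainX} by a reduction that bakes the fixed annotation sets $A_1,\dots,A_p$ into the instance graph, in the same spirit as the forked construction used for \cref{thm:cmsomainX}. Since $p$ is a constant, the only information about a vertex $v$ that the formula needs is its \emph{membership pattern} $\mathrm{pat}(v):=\{\,i\in[p] : v\in A_i\,\}$, one of at most $2^p$ possibilities, and this can be recorded by attaching to $v$ a rigid pendant gadget $D_v$ --- for concreteness, a pendant path $v=a^v_0,a^v_1,\dots,a^v_p$ together with an extra pendant leaf at $a^v_i$ exactly when $i\in\mathrm{pat}(v)$. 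Superimposing on this the forked construction of the proof of \cref{thm:cmsomainX} with $M=V(G)$ (which provides the apparatus for expressing that $v$ lies in the free-variable set and for carrying the weight $\wei(v)$ on a pendant vertex), I obtain a graph $G''$ together with a weight function $\wei''$ that vanishes on all new vertices except the weighted endpoints inherited from the forked construction.

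Next I would write a \cmsotwo formula $\psi''$ with one free vertex-set variable such that, for an induced subgraph $H$ of $G''$ and a set $X\subseteq V(H)$, the pair $(H,X)$ satisfies $\psi''$ if and only if $H$ admits a (necessarily unique) decomposition into a \emph{core} $C\subseteq V(G)$ together with, for each $v\in C$, the complete gadget $D_v$ and the complete forked gadgets of $v$, with $X\subseteq C$, and such that --- writing $A'_i$ for the set of those $v\in C$ for which $a^v_i$ carries its extra pendant leaf in $H$ --- one has $(G[C],X,A'_1,\dots,A'_p)\models\psi$ together with $\chi(G[C])\le r$. Constructing $\psi''$ is routine once the gadgets are fixed: the decomposition is pinned down by the consistency requirements that every core vertex carry exactly its prescribed gadgets and that no vertex lie outside the core and its gadgets; recognising a rigid pendant gadget and reading its pattern bits is first-order; demanding that the gadget $D_v$ actually be \emph{present} in $H$ for every $v\in C$ forces the algorithm to include the genuine $D_v$, so that $A'_i=A_i\cap C$; and relativising $\psi$ and the colouring formula $\phi_{\chi\le r}$ to $C$ is standard (cf.\ the proof of \cref{thm:cmsomain}).

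I would then run the algorithm of \cref{thm:cmsomainX} on $G''$, $\wei''$, $\psi''$, with parameters $k$ and $r':=\max(r,2)$, obtaining sets $X\subseteq F$ in $G''$; by the defining property of $\psi''$ the set $F$ splits as a core $C\subseteq V(G)$ plus the gadgets of $C$, and I output the pair $(X,C)$. This is legitimate: a tree decomposition of $G''$ of independence number at most $k$ is obtained from the given one for $G$ by hanging each (pendant-tree) gadget off it as a path of size-two bags, which does not change the independence number --- exactly as for $\forked{G}$ in the proof of \cref{thm:cmsomainX}; and since the gadgets are pendant trees whose size is bounded in terms of $p$ alone, $|V(G'')|=\Oh(|V(G)|)$ and the treewidth of any solution (core together with attached gadgets) is $\Oh(kr)$, so the algorithm runs in time $f(k,r,p,\psi)\cdot|V(G)|^{\Oh(kr)}$. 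A short correspondence argument closes the loop: an optimal $(X^\star,F^\star)$ for the original instance lifts, by adjoining to $F^\star$ all gadgets of its vertices, to a feasible solution of the \cref{thm:cmsomainX} instance of the same $\wei''$-weight, while any solution returned by \cref{thm:cmsomainX} decodes, via the defining property of $\psi''$, to a feasible solution of the original instance of the same weight; since $\wei''$ agrees with $\wei$ on $V(G)$ and is $0$ elsewhere and the relevant sets lie in $V(G)$, the weights match.

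The step I expect to be the real obstacle is engineering the gadgets and $\psi''$ so that four demands hold simultaneously: the gadgets are unambiguously decodable by a \emph{fixed} \cmsotwo formula; they are \emph{forced} to be fully present in every accepted subgraph, so that the decoded annotation sets are the true $A_i\cap C$ rather than understated subsets --- this is precisely why the encoding must reside in a structure the formula can demand, not merely passively read off; they are weight-neutral; and they do not spoil the independence number of the tree decomposition nor push the treewidth of solutions above $\Oh(kr)$. All four are achievable by the by-now-standard device of rigid pendant trees together with the self-enforcing partition constraints in $\psi''$, exactly along the lines of~\cite[Theorem~38]{DBLP:conf/stoc/GartlandLPPR21} and the proof of \cref{thm:cmsomainX}; with that in place, everything else is bookkeeping.
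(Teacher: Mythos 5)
Your proposal matches the paper's approach: the paper proves this theorem exactly by the gadget method you describe, namely attaching rigid pendant structures that encode each vertex's membership pattern in $A_1,\dots,A_p$ (the paper phrases it as distinguishing vertices ``by the number of attached leaves'') and then reusing the forked-graph machinery of \cref{thm:cmsomainX}. The paper omits all details of this step, so your explicit discussion of why the formula must force the annotation gadgets to be fully present in any accepted subgraph (so that the decoded sets equal $A_i\cap C$ rather than being understated) is a welcome elaboration rather than a deviation.
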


\cref{thm:cmsomainAnn} can be proven in a way analogous to \cref{thm:cmsomainX}, but this time we distinguish the vertices from the sets $A_i$ by the number of attached leaves. We skip the details, as they do not bring any new insight.

\subsection{Algorithmic applications}
Using the framework of \cref{thm:cmsomain} we can solve several well-studied problems in graphs given with a tree decomposition of bounded independence number:
\begin{itemize}
\item \textsc{Max Weight Independent Set},
\item \textsc{Max Weight Induced Forest}, which is equivalent by complementation to \textsc{Min Weight Feedback Vertex Set},
\item \textsc{Max Weight Induced Bipartite Subgraph}, which is equivalent by complementation to \textsc{Min Weight Odd Cycle Transversal},
\item \textsc{Max Weight Induced Odd Cactus}, which is equivalent by complementation to \textsc{Min Weight Even Cycle Transversal},
\item \textsc{Max Weight Induced Planar Subgraph}, which is equivalent to \textsc{Planarization}.
\end{itemize}

Using the extension from \cref{thm:cmsomainX} we can solve some other problems, for example:
\begin{itemize}
\item find the maximum number of pairwise independent induced cycles.
\end{itemize}
Note that the difference is that we maximize the \emph{number} of cycles, and not their \emph{total size} (or weight), as we could do using \cref{thm:cmsomain}.
On other other hand, the number of induced cycles in a graph might be very large, so this problem cannot be solved using \cref{max-weight-independent-subgraph-packing-for-bounded-tree-alpha} (or \cref{max-weight-distance-d-subgraph-packing-for-bounded-tree-alpha}).
However, we can solve our problem by calling \cref{thm:cmsomainX} with the following parameters:
\begin{itemize}
    \item $d=3$,
    \item $\psi$ is the formula saying that $G[F]$ is 2-regular (i.e., each component of $G[F]$ is a cycle) and each vertex from $X$ must be in a distinct connected component of $G[F]$, and
    \item the weight of each vertex is 1.
\end{itemize}

Finally, the extension from \cref{thm:cmsomainAnn} allows us to solve problems where we allow annotations on vertices.
An example of such a problem is the following one:
\begin{itemize}
\item given a graph, whose every vertex is equipped with a subset of $[r]$ (called a \emph{list}),
find a maximum induced subgraph that admits a proper $r$-coloring respecting the lists.
\end{itemize}
Indeed, it is sufficient to call \cref{thm:cmsomainAnn}, where each set $A_i$ contains vertices with a given list (note that there are at most $2^r$, i.e., a constant number of distinct lists), and $\psi$ is a formula similar to \eqref{eq:chi-k}, but also checking if the colors assigned to vertices respect the lists.  Note that the last problem is a generalization of \textsc{List}-$r$-\textsc{Coloring} (for constant $r$).

\medskip
Furthermore, let us recall that if the input graph is not equipped with a tree decomposition with bounded independence number, we can again invoke a result from~\cite{dallard2022computing} to infer that all of the above problems are polynomial-time solvable in any class of graphs in which the tree-independence number is bounded by a  constant.

\medskip
In conclusion, let us point out that in \cref{thm:cmsomain,thm:cmsomainX,thm:cmsomainAnn} we need to assume that the chromatic number $d$ of the solution is constant. Indeed, if the number of colors is a part of the input, already the (\textsc{List-}) \textsc{Coloring} problem becomes \NP-hard in graphs of bounded tree-independence number.

\begin{theorem}The following problems are \NP-hard:
\begin{enumerate}
    \item \textsc{List-Coloring} in graphs of tree-independence number 1,
    \item \textsc{Coloring} in graphs of tree-independence number 2.
\end{enumerate}
\end{theorem}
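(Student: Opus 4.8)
The plan is to exhibit, for each problem, a simple graph construction that forces tree-independence number $1$ (resp.\ $2$) while encoding a known NP-hard coloring-type problem. For part (1), I would recall that \textsc{List-Coloring} is already known to be NP-hard on chordal graphs (in fact on interval graphs, a classical result of Biró, Hujter, and Tuza / Jansen), and then invoke the fact stated in the preliminaries that a graph is chordal if and only if $\tin(G)\le 1$. Since chordal graphs have $\tin(G)\le 1$ and \textsc{List-Coloring} is NP-hard on them, we are done immediately; the only thing to verify is that the reduction in the literature produces chordal (indeed interval) instances, which it does. So for part (1) the ``construction'' is really just a citation plus the $\tin$-characterization of chordal graphs.

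For part (2), I would reduce from (\textsc{List-})\textsc{Coloring} on chordal graphs, or more directly from ordinary \textsc{$3$-Coloring} on general graphs, turning lists into gadgets while controlling $\tin$. The cleanest route: start from an instance of \textsc{List-Coloring} on a chordal graph $H$ with lists $L(v)\subseteq[r]$, and build a graph $G$ for plain \textsc{Coloring} by taking $H$ together with a clique $K$ on $r$ new vertices $c_1,\dots,c_r$ (the ``palette''), making each $c_i$ adjacent to those $v\in V(H)$ with $i\notin L(v)$. In any proper coloring of $G$ the clique $K$ uses all $r$ colors, so we may assume $c_i$ gets color $i$; then a vertex $v$ adjacent to $c_i$ cannot use color $i$, which is exactly the list constraint. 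To bound $\tin(G)$, take an optimal tree decomposition of the chordal graph $H$ (bags are cliques, so $\alpha$-number $1$) and add all of $K$ to every bag; since $K$ is a clique, every bag now induces a graph whose independent sets have size at most $2$ (one vertex of $H$ plus at most one vertex of $K$ — actually at most one vertex from the $H$-part of the bag, which is a clique, plus at most one from $K$). Hence $\alpha$ of each bag is at most $2$, giving $\tin(G)\le 2$, and the decomposition is still valid because every new edge $c_iv$ and every edge inside $K$ lies in some (in fact every) bag. This yields NP-hardness of \textsc{Coloring} on graphs with $\tin\le 2$.

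The main obstacle, and the only place requiring care, is the tree-independence bound in part (2): one must make sure that the palette gadget $K$ is a clique (so that adding it to bags costs only $+1$ in the independence number, not more) and that the original graph supplying NP-hardness has $\tin\le 1$, i.e.\ is chordal — hence the choice to reduce from \textsc{List-Coloring} on chordal graphs rather than from \textsc{$3$-Coloring} on arbitrary graphs (where adding a triangle palette to every bag would not control $\alpha$). One should also double-check the base hardness: \textsc{List-Coloring} is NP-hard on chordal graphs even with bounded list sizes, so $r$ is a fixed constant and the palette clique has constant size, keeping the reduction polynomial. Everything else — verifying the tree-decomposition axioms for $G$ and the coloring equivalence — is routine and can be stated in a sentence or two.
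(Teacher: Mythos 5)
Your part (1) is essentially the paper's argument: cite NP-hardness of \textsc{List-Coloring} on a subclass of chordal graphs and invoke the characterization $\tin(G)\le 1 \Leftrightarrow G$ chordal. (The paper cites Golovach and Paulusma for complete split graphs rather than Bir\'o--Hujter--Tuza/Jansen for interval graphs, but either base result works.)

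For part (2) you take a genuinely different route. The paper cites the NP-completeness of \textsc{Coloring} on circular-arc graphs (Garey et al.) and then argues that every circular-arc graph has tree-independence number at most $2$: delete the arcs through a fixed point of the circle to get an interval graph, take its clique-tree decomposition, and add the clique of deleted arcs to every bag. You instead reduce from \textsc{List-Coloring} on chordal graphs by attaching a palette clique $K$ on $r$ vertices with $c_i$ adjacent to $v$ iff $i\notin L(v)$, and bound $\tin$ by adding $K$ to every bag of a clique-tree decomposition of the chordal host. Both arguments are correct, and in fact the $\tin\le 2$ bound is established by the same mechanism (clique-bag decomposition plus one extra clique in every bag); your version has the advantage of being self-contained modulo the part-(1) hardness result, whereas the paper outsources the hardness to a second classical reduction.

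One claim in your writeup is false, though harmless. You assert that \textsc{List-Coloring} is NP-hard on chordal graphs ``even with bounded list sizes, so $r$ is a fixed constant.'' Bounded list sizes do not imply a bounded palette, and for a \emph{fixed} number of colors $r$ the problem \textsc{List-$r$-Coloring} is polynomial on chordal graphs: any yes-instance has clique number at most $r$, hence treewidth less than $r$, and the problem is then solvable by standard dynamic programming. The hardness you need holds only when the total number of colors is part of the input. Fortunately your reduction does not require $r$ to be constant: the palette clique has at most $|$input$|$ vertices, so the reduction is polynomial anyway, and adding a clique of \emph{any} size to every bag raises the independence number of each bag by at most one, so $\tin(G)\le 2$ regardless of $r$. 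With that sentence corrected (or simply deleted), the proof stands.
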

\begin{proof}
The first statement follows from the result of Golovach and Paulusma~\cite{DBLP:journals/dam/GolovachP14}, who showed that \textsc{List-Coloring} is \NP-hard in complete split graphs, i.e., graphs whose vertex set can be partitioned into an independent set and a clique, and all edges between these two sets are present. Complete split graphs are in particular chordal
and thus of tree-independence number $1$.


To show the second statement, it suffices to observe that the \textsc{Coloring} problem is $\NP$-complete in the class of circular-arc graphs, as shown by Garey et al.~\cite{MR578325}.
Furthermore, as we explain next, circular-arc graphs have tree-independence number at most two. 
Every circular-arc graph $G$ is the intersection graph of a family $\mathcal{F}$ of connected subgraphs of a graph $H$ such that $H$ is a cycle.
Fix a vertex $v\in V(H)$ and let $G_{\bar{v}}$ denote the intersection graph of the graphs in $\mathcal{F}_{\bar{v}} = \{F\in \mathcal{F}: v\not\in V(F)\}$.
Then $G_{\bar{v}}$ is a chordal graph (in fact, an interval graph) and hence admits a tree decomposition $\mathcal{T}$ in which each bag is a clique.
Finally, denoting by $K$ the set of vertices of $G$ corresponding to graphs in $\mathcal{F}$ containing $v$, a tree decomposition of $G$ with independence number at most $2$ can be obtained from the tree decomposition $\mathcal{T}$ of $G_{\bar{v}}$ by adding the vertices of the clique $K$ to every bag.
\end{proof}

\subsection*{Acknowledgements}
The authors are grateful to O-joung Kwon for asking about the complexity of the \textsc{Distance-$d$ Independent Set} problem in classes of graphs with bounded tree-independence number, and to Cl\'ement Dallard and Kenny Štorgel for helpful discussions and remarks. 
This work is supported in part by the Slovenian Research Agency (I0-0035, research program P1-0285 and research projects J1-9110, N1-0102, N1-0160, J1-3001, J1-3002, and J1-3003).

\bibliographystyle{abbrv}
\bibliography{biblio}
\end{document}